\newtheorem{theorem}{Theorem}
\newtheorem{proposition}{Proposition}
\newtheorem{lemma}{Lemma}
\newtheorem{proof}{Proof}
\begin{document}

\title{Delta-Ramp Encoder for Amplitude Sampling \\and its Interpretation as Time Encoding}

\author{Pablo~Mart\'inez-Nuevo, \IEEEmembership{Member,~IEEE,}
        Hsin-Yu Lai,~\IEEEmembership{Student Member,~IEEE,}
        \\and~Alan~V.~Oppenheim,~\IEEEmembership{Life Fellow,~IEEE}
\thanks{\copyright~2019 IEEE. Personal use of this material is permitted. Permission from IEEE must be obtained for all other uses, in any current or future media, including reprinting/republishing this material for advertising or promotional purposes, creating new collective works, for resale or redistribution to servers or lists, or reuse of any copyrighted component of this work in other works.

This work was supported in part by Texas Instruments Leadership University Program. The work of P. Mart\'inez-Nuevo was also supported in part by Fundaci\'on Rafael del Pino, Madrid, Spain. The work of H. Lai was also supported in part by the Jacobs Fellowship and the Siebel Fellowship.

P. Mart\'inez-Nuevo was with the Department of Electrical Engineering and Computer Science, Massachusetts Institute of Technology. He is now with the research department at Bang \& Olufsen, 7600 Struer, Denmark (e-mail: pmnuevo@alum.mit.edu).

H. Lai and A. V. Oppenheim are with the Department of Electrical Engineering and Computer Science, Massachusetts Institute of Technology, Cambridge MA 02139 USA (e-mail: hsinyul@mit.edu; avo@mit.edu).}
\thanks{Digital Object Identifier 10.1109/TSP.2019.2904027}}

\markboth{IEEE Transactions on Signal Processing}%
{Shell \MakeLowercase{\textit{et al.}}: Bare Demo of IEEEtran.cls for IEEE Journals}

\maketitle
\begin{abstract}
The theoretical basis for conventional acquisition of bandlimited signals typically relies on uniform time sampling and assumes infinite-precision amplitude values. In this paper, we explore signal representation and recovery based on uniform amplitude sampling with assumed infinite precision timing information. The approach is based on the delta-ramp encoder which consists of applying a one-level level-crossing detector to the result of adding an appropriate sawtooth-like waveform to the input signal. The output samples are the time instants of these level crossings, thus representing a time-encoded version of the input signal. For theoretical purposes, this system can be equivalently analyzed by reversibly transforming through ramp addition a nonmonotonic input signal into a monotonic one which is then uniformly sampled in amplitude. The monotonic function is then represented by the times at which the signal crosses a predefined and equally-spaced set of amplitude values. We refer to this technique as amplitude sampling. The time sequence generated can be interpreted alternatively as nonuniform time sampling of the original source signal. We derive duality and frequency-domain properties for the functions involved in the transformation. Iterative algorithms are proposed and implemented for recovery of the original source signal. As indicated in the simulations, the proposed iterative amplitude-sampling algorithm achieves a faster convergence rate than frame-based reconstruction for nonuniform sampling. The performance can also be improved by appropriate choice of the parameters while maintaining the same sampling density.

\end{abstract}

\begin{IEEEkeywords}
Sampling theory, level-crossing sampling, nonuniform sampling and reconstruction, iterative algorithms.
\end{IEEEkeywords}

%
\IEEEpeerreviewmaketitle

\section{Introduction}
%
%
%
%
\IEEEPARstart{T}{he} theoretical foundation of conventional time sampling typically relies on the sampling theorem for bandlimited signals \cite{Whittaker:1915aa,Kotelnikov:1933aa,Shannon:1949aa}, which states that bandlimited signals can be perfectly represented by infinite-precision amplitude values taken at equally-spaced time instants appropriately separated. In this paper, we propose a signal representation based on equally-spaced amplitude samples with infinite-precision timing information. We introduce the delta-ramp encoder that generates a time encoded version of the input signal and show how this sampling and reconstruction process can be theoretically analyzed based on the amplitude sampling concept also introduced in this paper.

Signal representation based on discrete amplitudes and continuous time has previously been studied and utilized in a number of contexts. In \cite{Bond:1958aa} signal representation consists of the real and complex zeros of a bandlimited signal. Logan's theorem \cite{Logan:1977aa} characterizes a subclass of bandpass signals that can be completely represented, up to a scaling factor, by their zero crossings. Practical algorithms for recovery from zero crossings of periodic signals in this class have been proposed in \cite{Roweis:1998ac}. Arbitrary bandlimited signals can also be implicitly described by the zero crossings of a function resulting from an invertible transformation \cite{Haavik:1966aa,Bar-David:1974aa,Kumaresan:2000aa}---for example, the addition of a sinewave \cite[Theorem 1]{Duffin:1938aa}. In principle, interpolation is possible through Hadamard's factorization \cite[Chapter 5]{Stein:2003aa} although there are more efficient techniques in terms of convergence rate \cite{Selva:2012aa,Kay:1986aa,Sreenivas:1992aa,Kumaresan:2010aa}. Zero-crossings have also been studied in relation to wavelet transforms \cite{Mallat:1991aa}. In this case, stable reconstruction can be achieved by including additional information about the original signal.

The extension from zero crossings to multiple levels, in the context of data compression, was investigated in \cite{Mark:1981aa}. In that work, a sample is generated whenever the source signal crosses a predefined set of threshold levels. The time instants of the crossings and the level-crossings directions were utilized to represent the signal although time was still quantized due to practical considerations. A practical continuous-time version of level-crossing sampling was later proposed in \cite{Tsividis:2003aa}. Asynchronous delta modulation \cite{Inose:1966aa} is, also, in some sense, a precursor of level-crossing sampling since it generates a positive or negative pulse at time instants when the change in signal amplitude surpasses a fixed quantity. 

In the context of asynchronous sigma-delta modulation systems,

In the context of asynchronous sigma-delta modulation systems, the connection between time-based representation and local averages of bandlimited signals was shown in \cite{Lazar:2003aa} where frame-based reconstruction can be carried out \cite[Theorem 7]{Feichtinger:1994aa}. This sampling process can then be viewed as a representation of a signal as a stream of pulses where processing can be performed directly in the pulse domain \cite{McCormick:2012aa}.

In this paper, we study the time encoding process of the delta-ramp encoder (see Fig.~\ref{fig:AmpSamp_Implementation}) and reconstruction from the generated time sequence. We show that this system can be analyzed theoretically based on the more general concept of amplitude sampling with the signal represented by the time sequence of equally-spaced level crossings of a monotonic transformation of the input signal as would be generated, for example, by an amplitude quantizer with equal step sizes. In principle, if a signal were monotonic, then the crossings of equally-spaced amplitude levels would generate an ordered time sequence $\{t_n\}$ which could be considered as a representation of the signal. Under appropriate conditions, this corresponds to uniform sampling in amplitude with the signal information contained in the time sequence $\{t_n\}$. Nonmonotonic signals can be reversibly transformed into monotonic ones which are then uniformly sampled in amplitude. We refer to this technique as amplitude sampling.

As discussed in Section \ref{section:Delta-Ramp Encoder} where we introduce the delta-ramp encoder, when the reversible transformation consists of adding a ramp with appropriate slope, a practical implementation to generate the identical ordered time sequence $\{t_n\}$ is the delta-ramp encoder shown in Fig.~\ref{fig:AmpSamp_Implementation}. The time sequence generated by the delta-ramp encoder and that obtained by uniform amplitude sampling after ramp addition are identical. For the theoretical analysis of the delta-ramp encoder in this paper, we utilize the interpretation of the time sequence $\{t_n\}$ as derived from uniform amplitude sampling of the monotonic function obtained by ramp addition. Section \ref{section:amplitude sampling} defines the general concept of amplitude sampling. In sections \ref{section:RampAddition}, \ref{section:SpectralProperties}, and \ref{section:TimeDomainProperties} we derive duality as well as time- and frequency-domain properties relating the functions present in the transformation. The structure of these functions suggest an iterative reconstruction algorithm for numerical recovery of the source signal from the amplitude samples. This algorithm is discussed in Section \ref{section:ReconsAmpSamp} with simulations and comparisons with frame-based reconstruction from nonuniform time samples.

Throughout the paper, we refer to $\hat{f}$ as the Fourier transform of the function $f$ given by
\begin{equation}
\label{eq:FTx_pairs1}
\hat{f}(\xi)=\int_{\mathbb{R}}f(t)e^{-i2\pi \xi t}\mathrm{d}t,\ \xi\in\mathbb{R}.
\end{equation}
The Fourier inversion formula then takes the following form
\begin{equation}
\label{eq:FTx_pairs2}
f(t)=\int_{\mathbb{R}}\hat{f}(\xi)e^{+i2\pi \xi t}\mathrm{d}\xi,\ t\in\mathbb{R}.
\end{equation}
Note that the units for $\xi$ can be interpreted to be Hz. We say that a function $f$ is of moderate decrease or moderate decay if it is continuous and there exists $A>0$ such that $|f(t)|\leq A/(1+t^2)$ for all $t\in\mathbb{R}$.

\section{Delta-Ramp Encoder}
\label{section:Delta-Ramp Encoder}
The delta-ramp encoder is represented by the block diagram depicted in Fig.~\ref{fig:AmpSamp_Implementation}. The level detector produces an impulse at times at which the input signal reaches the value $\Delta$. For ease of illustration, assume the ramp-segment generator initiates a ramp with slope $\alpha>0$ that abruptly shifts down by $\Delta$ in amplitude whenever an impulse arrives. Assume $\alpha$ is chosen such that $\tilde{g}(t)$ is monotonic in each interval between successive impulses.

\begin{figure}[thpb]
\centering
\includegraphics[width=0.85\columnwidth]{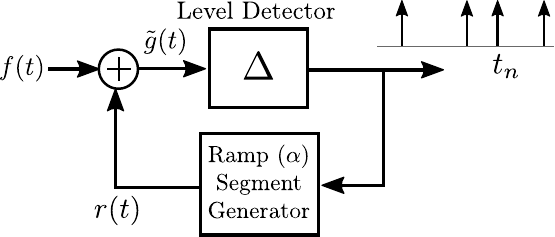}
\caption{Equivalent representation of the amplitude sampling process.}
\label{fig:AmpSamp_Implementation}
\end{figure}

Fig.~\ref{fig:AmpSamp_Implementation_explanation} shows an example of the signals involved in the process. By construction, the ramp segments of the function $r(t)$ present the same slope. This manifests itself in the presence of an continuous ramp of slope $\alpha$ separated by multiples of $\Delta$ for each corresponding segment. Consequently, the function $\tilde{g}(t)$ satisfies the following 
\begin{equation}
\tilde{g}(t)=f(t)+\alpha t-k\Delta
\end{equation}
for $t\in(t_k,t_{k+1}]$ and $k\in\mathbb{Z}$. Thus, 
\begin{equation}
\tilde{g}(t_{k+1})=\Delta=f(t_{k+1})+\alpha t_{k+1}-k\Delta
\end{equation}
which gives $(k+1)\Delta=g(t_{k+1})$ for all $k\in\mathbb{Z}$ where $g(t)=\alpha t+f(t)$. Consider now the time instants $\{t_n\}$ that satisfy $g(t_n)=n\Delta=\alpha t_n+f(t_n)$. As a result of the one-to-one correspondence between amplitude values and time instants due to the monotonicity of $g(t)$, it follows that $\{t_k\}=\{t_n\}$. Thus, the delta encoder generates impulses at the same time instants at which $g(t)$ crosses the set of amplitude levels $\{n\Delta\}$.

In summary, the delta-ramp encoder produces a representation of the input signal as a sequence of time instants, or time codes. This time encoding mechanism can be alternatively viewed as level-crossing sampling of the function $g(t)$ or nonuniform sampling of $f(t)$, i.e. $f(t_n)=n\Delta-\alpha t_n$. Moreover, the function $g(t)$, assuming appropriate regularity conditions, has an inverse function $t(g)$ which is effectively sampled uniformly in the amplitude domain with samples corresponding to these time instants. Therefore, the sampling process of the delta-ramp encoder can be interpreted as uniformly sampling the function $t(g)$. In principle, it is possible to generalize this concept by considering any transformation that generates a monotonic function $g(t)$. We formalize this concept in the next section.

\begin{figure}[thpb]
\centering
\includegraphics[width=0.6\columnwidth]{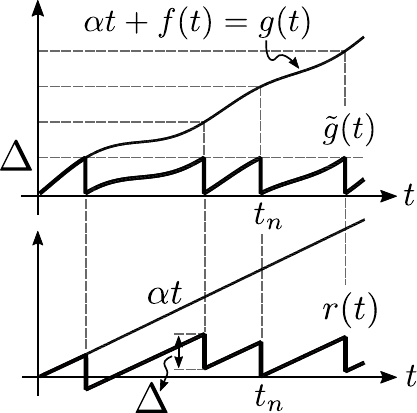}
\caption{Illustration of the different waveforms involved in the system shown in Fig.~\ref{fig:AmpSamp_Implementation}.}
\label{fig:AmpSamp_Implementation_explanation}
\end{figure}

\section{Principle of Amplitude Sampling}
\label{section:amplitude sampling}
Amplitude sampling and reconstruction as developed in this paper is then based on the principle of reversibly representing and then sampling a time function $g(t)$ in the form $t(g)$ and then sampling in $g$. This requires that $g(t)$ be monotonic which means that if the source signal is nonmonotonic, it must first be reversibly transformed into a strictly monotonic function through a transformation $\phi$. As illustrated in Fig.~\ref{fig:AmpSampPrinciple}, the resulting function $\phi(f(t))$ is then uniformly sampled. The time instants $\{t_n\}$ at which $\phi(f(t))$ crosses the predefined set of amplitude values $\{n\Delta\}$ implicitly represent the source signal, i.e. $\phi(f(t_n))=n\Delta$ where $\Delta>0$ is the separation between consecutive levels. Each of the time instants is paired exactly with one amplitude level. Thus, there exists a one-to-one correspondence between amplitude values and time instants. The sequence of time instants together with knowledge of $\Delta$ is sufficient information to describe the sampling process. Thus, it can be interpreted as a form of time encoding.

Amplitude sampling corresponds to signal-dependent nonuniform time sampling with the sampling density dependent on the source signal and the choice of the transformation $\phi$. 

\begin{figure}[thpb]
\centering
\includegraphics[scale=1.1]{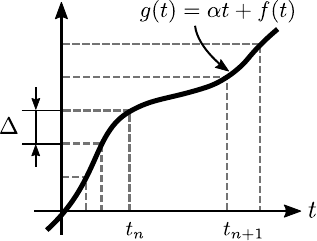}
\caption{Principle of amplitude sampling based on a transformation $\phi$ of the source signal $f$ resulting in a monotonic function $\phi(f(t))$.}
\label{fig:AmpSampPrinciple}
\end{figure}

We have shown in Section \ref{section:Delta-Ramp Encoder} that, when a ramp of appropriate slope is used, amplitude sampling is equivalent to the samples generated by a delta-ramp encoder. In fact, it can be shown that many delta-modulation systems can be interpreted as amplitude sampling. For a detailed analysis of the latter, the interested reader is referred to \cite[Chapter 4]{Martinez-Nuevo:2016aa}.

\section{Transformation by Ramp Addition}
\label{section:RampAddition}
There exist a myriad of transformations $\phi$ that can potentially generate a monotonic function from a given $f$. Among the simplest is  the addition of a ramp with a sufficiently large slope. Suppose the original signal $f$ is continuous, and it is possible to construct the strictly monotonic function $g(t)=\alpha t+f(t)$ for some $\alpha\in\mathbb{R}$. Then, the sampling process consists of the sequence of time instants $\{t_n\}$ satisfying $g(t_n)=\alpha t_n+f(t_n)=n\Delta$ for some $\Delta>0$. 

%
As indicated earlier, for analysis purposes in this paper, it is convenient to interpret the time sequence $\{t_n\}$ as resulting from sampling uniformly in amplitude the monotonic function $u=g(t)=\alpha t+f(t)$. In the context of this transformation, there exists an inverse function $g^{-1}(u)$ that we choose to express in the form $g^{-1}(u)=u/\alpha+h(u)$ for some amplitude-time function $h$. This interpretation suggests that this transformation can also be viewed as a mapping from $f$ to the associated function $h$.

\subsection{Mapping between $f$ and $h$}
The addition of a ramp represents a mapping, parametrized by the slope of the ramp, between the original signal and the function $h$. We denote this mapping by $M_\alpha$, i.e. $M_\alpha f=h$ which can be viewed as the addition of the ramp to obtain the monotonic function $g$ and, after inverting $g$, subtracting the ramp $u/\alpha$ to obtain $h$.  The reverse procedure to recover $f$ from $h$ consists of adding a ramp of slope $u/\alpha$ to $h$ and utilizing the invertibility of $g^{-1}$ as well as the correspondence between $g$ and $f$. This inverse mapping is denoted by  $M_{\alpha^{-1}}$ and satisfies $M_{\alpha^{-1}}h=f$.  Fig.~\ref{fig:Tx_ftohtof} illustrates the one-to-one correspondence between $f$ and $h$. These mappings are also summarized in equation form as \cite{Lai:2016ab}

\begin{equation}
\label{eq:JanesEq}
\begin{split}
f(t)=&-\alpha h(f(t)+\alpha t),\\
h(u)=&-\frac{1}{\alpha}f(h(u)+\frac{u}{\alpha}).
\end{split}
\end{equation}

\begin{figure}[thpb]
\centering
\includegraphics[scale=0.8]{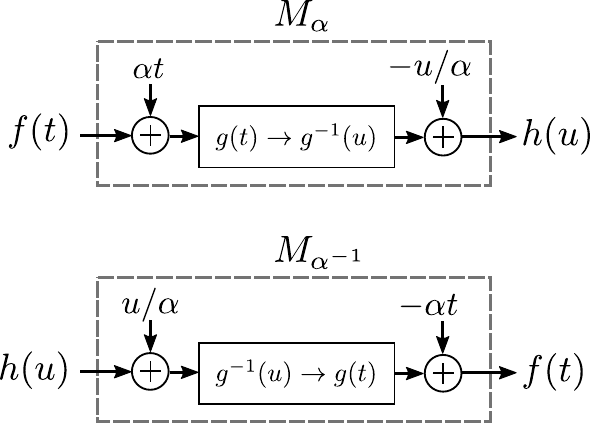}
\vspace{4mm}
\caption{Illustration of the invertibility of the transformation between $f$ and $h$ when $g(t)=\alpha t+f(t)$ and $g^{-1}(u)=u/\alpha+h(u)$.}
\label{fig:Tx_ftohtof}
\end{figure}

As is evident from Fig.~\ref{fig:Tx_ftohtof} and (\ref{eq:JanesEq}) there is a duality between $M_\alpha$ and its inverse. It is possible to interpret (\ref{eq:JanesEq}) as a signal-dependent warping operation that obtains $f$ from $h$ and vice versa. The addition of a ramp in amplitude sampling also generates an underlying mapping, dependent on $f$ or $h$, between time $t$ and amplitude $u$. Both mappings can be easily seen from (\ref{eq:JanesEq}) in its matrix form and the corresponding inverse matrix:
\begin{equation}
\label{eq:behaviorJanesEq}
\left(\begin{array}{c}f(t) \\t\end{array}\right)=\left(\begin{array}{cc}-\alpha & 0 \\1 & 1/\alpha\end{array}\right)\left(\begin{array}{c}h(u) \\u\end{array}\right),
\end{equation}

\begin{equation}
\label{eq:Behavior_h}
\left(\begin{array}{c}h(u) \\u\end{array}\right)=\left(\begin{array}{cc}-1/\alpha & 0 \\1 & \alpha\end{array}\right)\left(\begin{array}{c}f(t) \\t\end{array}\right).
\end{equation}
The duality implies that any properties of $h$ inherited by assumptions made on $f$ hold for $f$ if the same assumptions are instead imposed on $h$.

\subsection{The Sampling Process}
Amplitude sampling produces a sequence of time instants corresponding to $n\Delta=g(t_n)=\alpha t_n+f(t_n)$ where $\Delta>0$. This sampling process results then in $g^{-1}$ and $h$ being uniformly sampled in amplitude, i.e. 
\begin{equation}
g^{-1}(n\Delta)=n\Delta/\alpha+h(n\Delta)
\end{equation}
 and \begin{equation}
 h(n\Delta)=t_n-n\Delta/\alpha.
\end{equation}

\subsection{Sampling Density}
As noted earlier,  amplitude sampling in the form presented here can be viewed as equivalent to nonuniform time sampling. In this setting, stable reconstruction algorithms typically impose conditions on the sequence of sampling instants as for example the Landau rate \cite{Landau:1967aa} for bandlimited signals. In order to gain insight into the time-sampling density inherent in our amplitude-sampling process, assume the source signal $f$ has a bounded derivative, i.e. $|f'(t)|\leq B$ for some $B>0$ and that  $|\alpha|>B$ so that the function $g(t)=\alpha t+f(t)$ is strictly monotonic. Then  the time between successive samples satisfies the inequality  
\begin{equation}
\label{eq:boundssamplingset}
\frac{\Delta}{|\alpha|+B}\leq|t_{n+1}-t_n|\leq\frac{\Delta}{|\alpha|-B}.
\end{equation}
where $\Delta>0$ is the separation between consecutive amplitude levels.
 
The bounds in (\ref{eq:boundssamplingset}) are consistent with intuition. For example, assume that $\alpha$ is positive. The derivative of $g$ is bounded by $\alpha+B$ which provides the minimum attainable time separation between crossings. Similarly, the maximum separation is essentially limited by $\alpha-B$. The quantization step $\Delta$ represents the change in amplitude necessary to produce a sample. Additionally, when $\alpha$ achieves sufficiently large values, the bounds for time separation become closer, or equivalently, the time sequence becomes more uniform. We can observe this effect in (\ref{eq:boundssamplingset}) where amplitude is approximately a scaled version of the time axis.

\subsection{Iterative Algorithm for the Realization of $M_\alpha$}
In this section, we propose an iterative algorithm for the implementation of $M_\alpha$ to generate $h(u)$ from $f(t)$. By duality, an equivalent algorithm can be used for the implemention of  $M_{\alpha^{-1}}$ to generate $f(t)$ from $h(u)$. For ease of illustration, we consider a modified version of the transformation $M_\alpha$ by considering $\tilde{h}(u)=-\alpha h(\alpha u)$. This transformation, which we denote by $\tilde{M}_\alpha$, is then given by:
\begin{equation}
\label{eq:JanesEq_modified}
\begin{split}
f(t)&=\tilde{h}(\frac{1}{\alpha}f(t)+t),\\
\tilde{h}(u)&=f(-\frac{1}{\alpha}\tilde{h}(u)+u).
\end{split}
\end{equation}

Equations (\ref{eq:JanesEq_modified}) form the basis for the iterative algorithm formalized in the following theorem. 

\begin{theorem}
\label{thm:IterationTheorem}
Let the function $f$ be Lipschitz continuous with constant $<\alpha$ and suppose that $\sup_{t\in\mathbb{R}}|f(t)|\leq A$. Then, the function values $\tilde{h}(u)=(\tilde{M}_\alpha f)(u)$ for $u\in\mathbb{R}$ can be obtained by the iteration
\begin{equation}
\tilde{h}_{n+1}(u)=f(u-\frac{1}{\alpha}\tilde{h}_n(u))
\end{equation}
for $n\geq0$ where $\tilde{h}_0(u)=f(u)$ and $\tilde{h}_n(u)\to \tilde{h}(u)$ as $n\to\infty$.
\end{theorem}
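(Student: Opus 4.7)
My plan is a direct application of the Banach fixed-point theorem, using the relation (\ref{eq:JanesEq_modified}) to identify $\tilde{h}(u)$ as the unique fixed point of a contraction depending on the parameter $u$.

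For each fixed $u \in \mathbb{R}$, I would define the map $T_u : \mathbb{R} \to \mathbb{R}$ by $T_u(x) = f(u - x/\alpha)$, so that the iteration in the statement is exactly $\tilde{h}_{n+1}(u) = T_u(\tilde{h}_n(u))$. Since $|f| \leq A$, the image of $T_u$ lies in the closed interval $[-A,A]$, which is a complete metric space, and $T_u$ maps $[-A,A]$ into itself. Using the Lipschitz hypothesis with constant $L < \alpha$ (so $L/\alpha < 1$, treating $\alpha > 0$ without loss of generality), I would compute
\begin{equation*}
|T_u(x) - T_u(y)| = \bigl|f(u - x/\alpha) - f(u - y/\alpha)\bigr| \leq \frac{L}{\alpha}\,|x - y|,
\end{equation*}
which shows that $T_u$ is a strict contraction on $[-A,A]$ with contraction ratio $L/\alpha < 1$.

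By the Banach fixed-point theorem, $T_u$ has a unique fixed point $x^*(u) \in [-A,A]$, and the Picard iterates $T_u^n(x_0)$ converge to $x^*(u)$ for every initial $x_0 \in [-A,A]$. Taking $x_0 = \tilde{h}_0(u) = f(u) \in [-A,A]$ yields $\tilde{h}_n(u) \to x^*(u)$. It remains to identify $x^*(u)$ with $\tilde{h}(u) = (\tilde{M}_\alpha f)(u)$. The second line of (\ref{eq:JanesEq_modified}) gives $\tilde{h}(u) = f(u - \tilde{h}(u)/\alpha) = T_u(\tilde{h}(u))$, so $\tilde{h}(u)$ is a fixed point of $T_u$; since $|\tilde{h}(u)| = |f(\,\cdot\,)| \leq A$, it lies in $[-A,A]$ and must coincide with the unique fixed point $x^*(u)$, giving $\tilde{h}_n(u) \to \tilde{h}(u)$ as required.

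The only real subtlety is the identification step: one must make sure that the fixed point produced by the Banach argument is the same object defined implicitly by the monotonic transformation $g$ earlier in the section. This is immediate from (\ref{eq:JanesEq_modified}), but it is where the setup of Section~\ref{section:RampAddition} is actually used. The boundedness hypothesis $\sup|f| \leq A$ is only needed to confine the iterates to a single complete invariant set so that Banach applies cleanly; if one preferred to avoid it, the same contraction estimate on $\mathbb{R}$ together with completeness of $\mathbb{R}$ would also do the job, and in fact would yield the sharper statement that the convergence $\tilde{h}_n(u) \to \tilde{h}(u)$ is geometric with rate $L/\alpha$ uniformly in $u$.
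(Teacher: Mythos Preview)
Your proof is correct and follows essentially the same route as the paper: both fix $u$ and invoke the Banach fixed-point theorem, with the paper phrasing the contraction in the time variable via $v_{u_0}(t)=u_0-f(t)/\alpha$ on an interval $[u_0-\epsilon,u_0+\epsilon]$ and then rewriting the resulting iteration in the $\tilde{h}$ form, while you apply the contraction directly to the amplitude values on $[-A,A]$. The two maps are conjugate through $f$, so the arguments are equivalent and yield the same geometric rate $L/\alpha$.
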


The detailed proof is carried out in Appendix \ref{app:iterativeMalpha}.

\begin{figure}[thpb]
\centering
\includegraphics[width=0.85\linewidth]{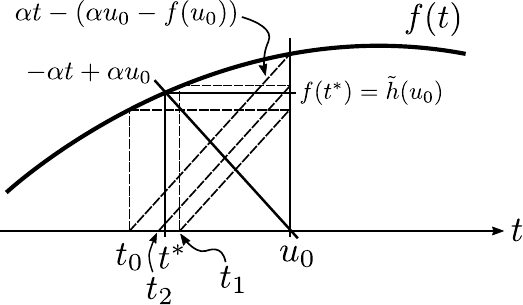}
\caption{Illustration of the iteration described in Theorem \ref{thm:IterationTheorem} with the initialization $t_0=u_0$.}
\label{fig:IterationM_a}
\end{figure}

As used in the preceding, the value of $\tilde{h}(u_0)$ can be obtained from the first equality in  (\ref{eq:JanesEq_modified}). In particular, $\tilde{h}(u_0)=f(t^*)$ where $t^*$ is the value that satisfies $u_0=t^*+f(t^*)/\alpha$. The solution is unique since the slope of the ramp, in absolute value, is always greater than the maximum value of the derivative of $f$. As shown in Fig.~\ref{fig:IterationM_a}, $t_0$ is the time instant at which the ramp $\alpha t_0-\alpha t$ intersects the function $f(t)$. In the same way, the value of the $(n+1)$-th iteration can be viewed as the solution of $\alpha t-(\alpha u_0-f(t_n))=0$. In other words, we iteratively construct a straight line passing through the point $(u_0,f(t_n))$. The intersection with the horizontal axis then corresponds to the value of $t_{n+1}$.

Note that the process for recovering $h$ from $f$ is analogous to the one presented in Theorem~\ref{thm:IterationTheorem}, i.e. the iteration takes the form $f_{n+1}(t)=\tilde{h}(f_n(t)/\alpha+t)$.

\section{Spectral Properties}
\label{section:SpectralProperties}{}
Assumptions made on the source signal $f$ are naturally reflected in the structure of $h$. In this section, we assume that $f$ is a bandlimited function and derive properties regarding the spectral content of the amplitude-time function $h$. The duality between $f$ and $h=M_\alpha f$ implies that similar conclusions can be made about $f$ when $h$ is assumed to be bandlimited.

In exploring the spectral content of $h$ we assume that $f$ is bandlimited to $\sigma$ rad/s  with $\sigma>0$ and bounded in amplitude, i.e. $|f(t)|<A$ for some $A$.  We further assume that the decay of $f(t)$ for $t$ real satisfies $|f(t)|\leq A/(1+t^2)$. In principle, the extension to square-integrable functions is straightforward. With our assumptions on $f$, Bernstein's inequality \cite{Bernstein:1926aa} provides the bound $|f'(t)|\leq A\sigma$ for all $t\in\mathbb{R}$. This bound gurantees that the function $u$ defined as 
\begin{equation}
u=g(t)=\alpha t+f(t).
\end{equation}
will be strictly monotonic whenever $|\alpha|>A\sigma$. The function $h$ is then given by $h(u)=g^{-1}(u)-u/\alpha$. From  Theorem \ref{thm:h_expdecay} below it follows that the decay of the Fourier transform of $h$, denoted by $\hat{h}(\xi)$, satisfies $\hat{h}(\xi)=\mathcal{O}(e^{-2\pi|\xi|b})$ as $\xi\to\infty$ where $b>0$ is determined by the difference $|\alpha|-A\sigma$.

\begin{theorem}
\label{thm:h_expdecay}
Let $f(t):\mathbb{R}\to\mathbb{R}$ be a continuous function bandlimited to $\sigma>0$ rad/s. Assume further that $|f(t)|\leq A/(1+t^2)$ for all $t\in\mathbb{R}$ and some $A>0$. Construct the function
\begin{equation} 
u=g(t)=\alpha t+f(t)
\end{equation} 
for $|\alpha|>A\sigma$. Then, there exists $g^{-1}(u)$ for all $u\in\mathbb{R}$ and a constant $C>0$ such that the Fourier transform of $h(u)=g^{-1}(t)-u/\alpha$ satisfies $|\hat{h}(\xi)|\leq Ce^{-2\pi |\xi|b}$ for any $0\leq b<a$ such that
\begin{equation}
\label{eq:h_expdecay}
a=\frac{|\alpha|}{\sigma}\log\Big(\frac{|\alpha|}{A\sigma}\Big)-\frac{|\alpha|-A\sigma}{\sigma}.
\end{equation}
and $\xi\in\mathbb{R}$.
\end{theorem}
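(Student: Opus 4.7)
The plan is to derive the claimed Fourier decay from the Paley--Wiener paradigm: I will show that $h$ admits an analytic extension to the horizontal strip $S_u = \{u \in \mathbb{C} : |\operatorname{Im} u| < a\}$ with uniform integrability on horizontal lines, and then convert the width of that strip into exponential decay of $\hat h$ via a Cauchy contour shift. The specific formula (\ref{eq:h_expdecay}) will fall out of a careful image-of-strip analysis for the map $g$.

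The first step is to extend $f$ to the complex plane. Since $|f(t)| \leq A$ on $\mathbb{R}$ and $f$ is bandlimited to $\sigma$, iterated Bernstein's inequality gives $\|f^{(n)}\|_\infty \leq A\sigma^n$, and a Taylor expansion about each real point yields $|f(x+iy)| \leq A e^{\sigma|y|}$ and $|f'(x+iy)| \leq A\sigma e^{\sigma|y|}$. Consequently $|f'(z)| < |\alpha|$ throughout the convex strip $S_t := \{|\operatorname{Im} z| < s_0\}$ with $s_0 = \frac{1}{\sigma}\log(|\alpha|/(A\sigma)) > 0$. The mean value inequality along segments gives $|g(z_1) - g(z_2)| \geq (|\alpha| - \sup_{[z_1,z_2]}|f'|)|z_1 - z_2| > 0$, so $g$ is injective and holomorphic on $S_t$, and $g^{-1}$ is holomorphic on $g(S_t)$ by the open mapping theorem.

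The delicate step is to locate $g(S_t)$ in the $u$-plane. On the boundary line $\operatorname{Im} z = s_0$ I would exploit that $f$ is real on $\mathbb{R}$ and write $f(x+is_0) - f(x) = i\int_0^{s_0} f'(x+is)\,ds$, yielding the refined estimate
\begin{equation*}
|\operatorname{Im} f(x+is_0)| \leq \int_0^{s_0} |f'(x+is)|\,ds \leq A\bigl(e^{\sigma s_0} - 1\bigr) = \frac{|\alpha|}{\sigma} - A,
\end{equation*}
which is strictly stronger than the naive $|f| \leq Ae^{\sigma s_0} = |\alpha|/\sigma$. Taking $\alpha > 0$ for concreteness, this gives $\operatorname{Im} g(x+is_0) \geq \alpha s_0 - (|\alpha|/\sigma - A) = a$ uniformly in $x$, and symmetrically $\operatorname{Im} g(x-is_0) \leq -a$. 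Combined with $g(\mathbb{R}) = \mathbb{R}$ (by strict monotonicity and $g(\pm\infty) = \pm\infty$) and the fact that $g : S_t \to g(S_t)$ is a bijection between simply connected open sets, a connectedness argument --- any vertical path from the real axis to a point with $|\operatorname{Im} u| < a$ cannot exit $g(S_t)$ without crossing $g(\partial S_t)$, which lies in $\{|\operatorname{Im} u| \geq a\}$ --- establishes $S_u \subset g(S_t)$. Consequently both $g^{-1}$ and $h(u) = g^{-1}(u) - u/\alpha = -f(g^{-1}(u))/\alpha$ are holomorphic on $S_u$.

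For the final step, I would combine the identity $h(u) = -f(g^{-1}(u))/\alpha$ with the polynomial-decay bound $|f(x+iy)| \leq C/(1+x^2)$ on horizontal strips (obtained by applying the basic exponential-type estimate to the entire function $(1+z^2)f(z)$, which is bounded on $\mathbb{R}$) and the fact that $g^{-1}(u) - u/\alpha$ remains bounded on horizontal lines in $S_u$, to deduce a uniform bound $\|h(\cdot + iy)\|_{L^1(\mathbb{R})} \leq C'$ for $|y| \leq b$ and any fixed $b < a$. Cauchy's theorem, applied to the contour in $\hat h(\xi) = \int_{\mathbb{R}} h(x) e^{-i2\pi\xi x}\,dx$ shifted by $-ib$ when $\xi > 0$ and by $+ib$ when $\xi < 0$, then delivers $|\hat h(\xi)| \leq C'' e^{-2\pi b|\xi|}$. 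The main obstacle is the image-of-strip step: the sharp integral bound on $|\operatorname{Im} f(x+is_0)|$ is precisely what yields the $+A$ correction distinguishing (\ref{eq:h_expdecay}) from the naive estimate, and verifying $S_u \subset g(S_t)$ rigorously (rather than only showing $g(\partial S_t)$ misses $S_u$) requires the topological argument above. A secondary technical difficulty is the uniform $L^1$-control of $h$ on horizontal lines needed to justify the contour shift.
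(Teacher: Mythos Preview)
Your proposal is correct and arrives at exactly the right constant $a$, but the route differs substantively from the paper's.

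The paper works \emph{locally with disks}: it proves a power-series univalence criterion (if $|a_1|>\sum_{n\ge 2}n|a_n|R^{n-1}$ then $g$ is injective on $D_R$), applies it to $g(z)=\alpha z+f(z)$ expanded about each real $x$ using Bernstein bounds $|f^{(n)}(x)|\le A\sigma^n$, and then bounds from below $|g(z)-g(x)|\ge \alpha R - A(e^{\sigma R}-1)$ on $|z-x|=R$ to find an inscribed disk of radius $\rho$ in each image $g(D_R(x))$. Optimizing over $R$ gives $R=\sigma^{-1}\log(\alpha/(A\sigma))$ and $\rho\ge a$. The strip $S_a$ is then contained in $\cup_x g(D_R(x))$, and analytic continuation patches the local inverses.

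You work \emph{globally with a strip}: injectivity on $\{|\mathrm{Im}\,z|<s_0\}$ via the mean-value inequality, then a single image-of-boundary computation using the integral bound $|\mathrm{Im}\,f(x+is_0)|\le\int_0^{s_0}|f'(x+is)|\,ds\le A(e^{\sigma s_0}-1)$, combined with a properness/connectedness argument to pass from ``$g(\partial S_t)\subset\{|\mathrm{Im}\,u|\ge a\}$'' to ``$S_a\subset g(S_t)$''. It is no accident that your strip half-width $s_0$ equals the paper's optimal disk radius $R$, and that your lower bound $\alpha s_0 - A(e^{\sigma s_0}-1)$ for $\mathrm{Im}\,g$ on the boundary line is literally the same expression as the paper's lower bound for $|g(z)-g(x)|$ on the boundary circle: both optimizations reduce to maximizing $\alpha R - A(e^{\sigma R}-1)$.

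What each buys: your argument is cleaner and avoids the patching of local inverses, at the cost of the topological step (which is sound once you note $g$ is proper on the closed strip because $f$ is bounded there, so $\partial g(S_t)\subset g(\partial S_t)$). The paper's disk argument sidesteps that topology entirely---each image $g(D_R(x))$ automatically contains $D_a(g(x))$ by the boundary estimate---but pays with the analytic-continuation bookkeeping. Your moderate-decay step via $(1+z^2)f(z)$ is essentially equivalent to the paper's Phragm\'en--Lindel\"of argument (its Proposition~\ref{proposition:fc_decrease}), and both finish with the same contour-shift / Paley--Wiener mechanism.
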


The detailed proof is carried out in Appendix \ref{app:h_expdecay}.

As anticipated, the rate of decay of the Fourier transform at infinity depends on $|\alpha|-A\sigma$. The difference is logarithmic in the first term and linear in the second one. The larger the difference the faster the decay at infinity. Note that $a>0$ always holds since $|\alpha|>A\sigma$. Assuming $\alpha>0$, this difference is precisely impacting the highest slope portions in $h$, or, equivalently, the regions in which $f'$ is smallest. The underlying reason being that the derivative of $g^{-1}$ is the reciprocal of $g$, i.e. $(g^{-1}(u))'=1/g'(g^{-1}(u))$ for all $u\in\mathbb{R}$. Informally, it is the tilted regions in the shape of $h$ are responsible, to some extent, for the high-frequency content.

It should be emphasized that any bandlimited function will naturally be in the class of signals whose spectrum exhibits at least exponential decay  at infinity.  However, Theorem \ref{thm:h_nonBL} stated below asserts that $f$ and $h$ cannot be simultaneously bandlimited. The precise statement in the description of the theorem guarantees this property with the possible exception of, at most, one value of $\alpha$. For practical purposes, we can ignore this isolated case.

\begin{theorem}
\label{thm:h_nonBL}
Under the conditions of Theorem \ref{thm:h_expdecay} and unless $f$ is constant, the function $h$ is nonbandlimited for every $\alpha>A\sigma$ with at most one exception.
\end{theorem}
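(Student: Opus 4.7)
The plan is to argue by contradiction using value-distribution theory, exploiting the fact that a bandlimited $h$ would make the inverse $g^{-1}$ extend to an entire function. First, by Paley-Wiener, $h$ bandlimited implies $h$ is entire, so $G(u):=u/\alpha+h(u)$ is entire and coincides with the real-variable inverse $g^{-1}$ on the real line. Since $f$ is bandlimited, $g(t)=\alpha t+f(t)$ is entire as well, and the identity $g(g^{-1}(u))=u$ on $\mathbb{R}$ upgrades by analytic continuation to $g(G(u))=u$ for every $u\in\mathbb{C}$.

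Next I would differentiate this identity, obtaining $g'(G(u))\,G'(u)=1$, which forces $g'$ to be nonvanishing on the image of $G$. Because $f$ is not constant, $G$ is nonconstant entire, so Picard's little theorem tells us that the image of $G$ omits at most one complex value. Consequently the entire function $g'(z)=\alpha+f'(z)$ has at most one zero in $\mathbb{C}$, i.e.\ the equation $f'(z)=-\alpha$ has at most one solution.

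The final step invokes Picard's great theorem. The assumed decay $|f(t)|\leq A/(1+t^2)$ on $\mathbb{R}$, combined with bandlimitedness, rules out $f$ being a nonconstant polynomial, so $f$ and hence $f'$ are transcendental entire. Picard's great theorem then guarantees that $f'$ attains every complex value infinitely often, with at most one Picard exceptional value $c_0$ determined by $f$. For $f'(z)=-\alpha$ to have only finitely many solutions, one must have $-\alpha=c_0$; since $c_0$ is a single number, at most one real $\alpha>A\sigma$ qualifies, which is exactly the conclusion of the theorem.

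The step I expect to be the most delicate is confirming that $G$ truly is the analytic continuation of $g^{-1}$: this reduces to $h$ being entire (from bandlimitedness) together with $g$ being strictly monotonic on $\mathbb{R}$ (already guaranteed by $|\alpha|>A\sigma$) so that $g^{-1}$ is real-analytic there. Once that is in place, the Picard-theoretic deductions are a short consequence of standard value-distribution theory and require no new estimates beyond those already developed in Theorem~\ref{thm:h_expdecay}.
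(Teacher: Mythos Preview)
Your argument is correct and reaches the same conclusion, but it is organized differently from the paper's proof. The paper proceeds directly: since $f'$ is a nonconstant entire function, Picard's little theorem says $f'$ omits at most one value; for every other $\alpha$ there is a point $z_0$ with $g'(z_0)=0$, and the Local Mapping Theorem then makes $g$ locally $n$-to-$1$ near $z_0$, so any analytic continuation of $g^{-1}$ (hence of $h$) around $g(z_0)$ is multivalued, precluding $h$ from being entire and therefore from being bandlimited. Your route is the contrapositive with an extra detour: you assume $h$ entire, build $G$, use $g\circ G=\mathrm{id}$ to force $g'\neq 0$ on the range of $G$, then invoke Picard for $G$ to bound the zero set of $g'$ to at most one point, and finally appeal to Picard's \emph{great} theorem for the transcendental $f'$ to conclude that only one $\alpha$ can survive. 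This works, but the intermediate application of Picard to $G$ is unnecessary: from the real identity $g^{-1}(g(t))=t$ you get $G\circ g=\mathrm{id}$ on all of $\mathbb{C}$ by the same analytic-continuation step you already used, so $G$ is surjective, $g'$ has \emph{no} zeros, and Picard's little theorem for $f'$ alone finishes the argument. With that simplification your proof becomes precisely the contrapositive of the paper's, trading the Local Mapping Theorem for the chain-rule identity $g'(G(u))G'(u)=1$; each route is short, and yours has the minor advantage of never needing to discuss multivalued continuations explicitly.
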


The detailed proof is carried out in Appendix \ref{app:h_nonBL}.

In the singular case, in which $f$ is a constant,  it can be shown through the constructive process of $M_\alpha$ by ramp addition that $h$ is constant as well, specifically, for $f=A$, then $h=-A/\alpha$. From another point of view, according to (\ref{eq:JanesEq}), the function $f$ results, in general, from $h$ with a nonlinear warping of the independent variable. When either of the two functions is constant, the warping is affine. Therefore, in this case, the bandlimited property  is preserved \cite{Azizi:1999aa}. In our context the conclusion follows directly from (\ref{eq:JanesEq}) that if either $f$ or $h$ is constant, the other must be also. 
 
More generally, if $|\alpha|$ increases significantly, the warping function becomes approximately linear since $f(t)$ is negligible compared to $\alpha t$, i.e. $f(t)\approx -\alpha h(\alpha t)$.  This is consistent with (\ref{eq:h_expdecay}) where an increase of $|\alpha|$ produces a faster decay at infinity of $\hat{h}$.

\section{Time-Domain Decay Properties}
\label{section:TimeDomainProperties}
In some sense, $h$ inherits characteristics of $f$ since it is a  "time-warped" version of $f$. In this section, we show the connections between the properties of $f$ and $h$ in the time domain with the relationship between $f$ and $h$ as specified in (\ref{eq:JanesEq}) which explicitly requires that the slope of the ramp added to $f$ and the slope of the ramp subtracted to obtain $h$ be exact inverses. Intuitively, it is not surprising that the function $h$ should present decay properties similar to those of $f$ once the unbounded growth of the ramp component has been subtracted. In particular, when the slopes of the two ramps are reciprocals of each other, the decay of $h$ will match that of $f$. Otherwise $h$ does not decay appropriately on the real line (see Proposition \ref{prop:boundedness} in Appendix \ref{app:BoundednessSlopes}).

The transformation $M_\alpha$ also has an impact on the $L^p$ norms of the respective functions with the parameter $\alpha$ playing a crucial role. It can be shown---refer to Proposition \ref{prop:Lpnorms} in Appendix \ref{app:BoundednessSlopes}---that the decay on the real line of both functions is related by
\begin{equation}
||h||_p=\frac{1}{\alpha^{1-\frac{1}{p}}}||f||_p,\ p\in[1,\infty].
\end{equation}

Not only does $h$ belong to $L^p(\mathbb{R})$ if $f$ does, but their respective norms are also related by a scaling factor which is precisely $\alpha$. Indeed, for very large values of $|\alpha|$, the ramp approaches the vertical axis, thus reducing the range of $h$ and decreasing the norm.

In terms of a sense of distance, consider $h_1=M_\alpha f_1$ and $h_2=M_\alpha f_2$. The transformation $M_\alpha$ preserves the $L^1$ distance (see Proposition \ref{prop:L1norms} in Appendix \ref{app:BoundednessSlopes}), i.e. 
\begin{equation}
||h_1-h_2||_1=||f_1-f_2||_1.
\end{equation}
By duality, these properties hold irrespective of the role of each function as an input or output.

\section{Reconstruction from Delta-Ramp Encoding}
\label{section:ReconsAmpSamp}
The time encoding performed by the delta-ramp encoder can be seen, under the amplitude sampling perspective, as signal-dependent nonuniform time sampling of the source signal $f$ based on uniform time sampling of the associated amplitude-time function $h$. If $f$ is bandlimited, then as was shown in Section \ref{section:SpectralProperties} $h$ is not bandlimited and consequently cannot be exactly reconstructed through bandlimited interpolation. Our reconstruction approach begins by initially using sinc interpolation as an approximation. This is then extended to  an iterative algorithm that achieves accurate recovery. Throughout this entire section, we assume that the source signal $f$ is bandlimited to $\sigma$ rad/s, and $|f(t)|\leq A/(1+t^2)$ for $A>0$ and all $t\in\mathbb{R}$.

\subsection{Bandlimited Interpolation Algorithm (BIA)}
The approximate reconstruction of $f$ based on sinc interpolation of $h$ is depicted in Fig.~\ref{fig:BandlimitedApprox}. From this approximation to $h$ an approximation to $f$ is generated through $M_{\alpha^{-1}}$ which is then lowpass filtered since $f$ is assumed to be bandlimited. In particular, the D/C system is defined by the relationship
\begin{equation}
\label{eq:bandlimitedinterpolation}
h_{\Delta}(u)=\sum_{n\in\mathbb{Z}}h(n\Delta)\mathrm{sinc}(u/\Delta-n).
\end{equation}
Note that the samples of $h$ are related to $\{t_n\}$ as $h(n\Delta)=t_n-n\Delta/\alpha$, $n\in\mathbb{Z}$. The motivation to perform bandlimited interpolation from the samples of $h$ is based on the exponential decay of its spectrum. Since $h$ is nonbandlimited, the aliasing error can be characterized by the following bound (see Proposition \ref{prop:BLerror} in Appendix \ref{app:infinityerror})
\begin{equation}
\label{eq:Linftyerror}
||h-h_{\Delta}||_{\infty}\leq \frac{C'}{a}e^{-\pi\frac{b}{\Delta}}
\end{equation}
for any $0\leq b<a$ and some $C'>0$ where $a$ is given in (\ref{eq:h_expdecay}).

Then, the error in (\ref{eq:Linftyerror}) is then controlled both by the difference $|\alpha|-A\sigma$ and the quantization step size $\Delta$. As already discussed, increasing the difference $|\alpha|-A\sigma$ produces, in some sense, a function $h$  with a faster high-frequency spectral decay and therefore one that is more approximately bandlimited. Since the quantization step size $\Delta$ also determines the sampling density of $h$, by decreasing $\Delta$ the aliasing error of the bandlimited interpolation is also reduced. 

By an appropriate combination of a sufficiently large $\alpha$ and/or a sufficiently small $\Delta$, the function $h_\Delta(u)+u/\alpha$ can be assumed to be invertible. We then obtain $f_\Delta=M_{1/\alpha}h_{\Delta}$. The function $f_\Delta$ is nonbandlimited since $h_\Delta$ is bandlimited (see Theorem \ref{thm:h_nonBL}). Thus, we obtain the bandlimited interpolation $\tilde{f}$ by passing $f_\Delta$ through a lowpass filter with cutoff frequency $\sigma$ rad/s.

\begin{figure}[thpb]
\centering
\includegraphics[scale=0.48]{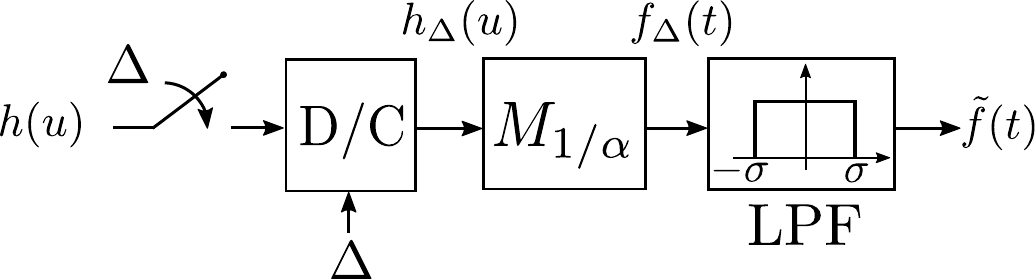}
\caption{Approximate reconstruction procedure for a bandlimited source signal $f$ such that $h=M_\alpha f$. The block D/C is a discrete-to-continuous operation involving sinc interpolation with period $\Delta$.}
\label{fig:BandlimitedApprox}
\end{figure}

\subsection{Iterative Amplitude Sampling Reconstruction (IASR)}
The bandlimited interpolation algorithm (BIA) forms the basis for an iterative algorithm which we refer to as the Iterative Amplitude Sampling Reconstruction (IASR) algorithm as detailed in Algorithm \ref{alg:IASR} and illustrated in Fig.~\ref{fig:AmpSamp_Jane}. The time instants $\{t_n\}$ represent both samples of $f$ and the associated function $h$. Similarly to BIA, the sample values $\{h(n\Delta)=t_n-n\Delta/\alpha\}$ are the input to the algorithm assuming we also know the parameters $\alpha$ and $\Delta$ involved in the sampling process. Note that if the initialization satisfies $h_0\equiv0$ and $f_0\equiv0$, the first iteration corresponds precisely to BIA, i.e. $f_1(t)=\tilde{e}_1(t)=\tilde{f}(t)$ for all $t\in\mathbb{R}$. Thus, the emphasis is placed on the reconstruction of $h$ from its uniform amplitude samples and then the bandlimited constraint is imposed on the successive approximations to $f$ with the objective of iteratively reducing the error $||\tilde{e}_k(t)||_2$.

\begin{figure}[htb]
\begin{minipage}[b]{1.0\linewidth}
  \centering
  \centerline{\includegraphics[width=0.9\textwidth]{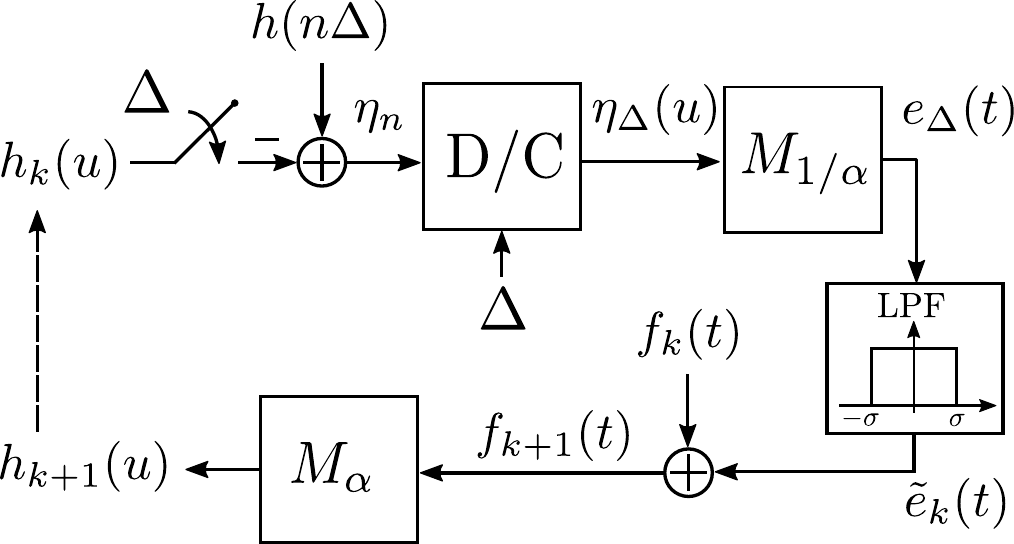}}
\end{minipage}
\caption{Block diagram representation of the iterative amplitude sampling reconstruction (IASR) algorithm.} 
\label{fig:AmpSamp_Jane}
\end{figure}

\begin{algorithm}
    \caption{IASR algorithm}
    \label{alg:IASR}
    \begin{algorithmic}[1] 
        \State \textbf{Input:} $\{t_n\}$, $\alpha$, $\Delta$, and $\sigma$
        \State $\{h(n\Delta)\}\gets \{t_n-n\Delta/\alpha\}$
        \State Initialize $h_0$ and $f_0$
        \Do
        	\State $\{\eta_n\}\gets\{h(n\Delta)-h_k(n\Delta)\}$
        	\State $\eta_{\Delta}(u)\gets\sum_{n\in\mathbb{Z}}\eta_n\mathrm{sinc}(u/\Delta-n)$
        	\State $e_{\Delta}\gets M_{1/\alpha} \eta_\Delta$
        	\State $\tilde{e}_k\gets \textrm{LPF}_\sigma(e_\Delta)$\Comment{$\textrm{LPF}_{\sigma}(\cdot)$ represents a lowpass filtering operation with cutoff frequency $\sigma$ rad/s.}
        	\State $f_{k+1}\gets f_k+\tilde{e}_k$
        	\State $h_{k+1}\gets M_\alpha f_{k+1}$
        \doWhile{$0\leq k<K$ \textbf{or} $||\tilde{e}_k||_2>\epsilon$}\Comment{$K$ and $\epsilon$ are parameters establishing the stopping criteria.}
        \State \textbf{return} $f_k$
    \end{algorithmic}
\end{algorithm}

\subsection{Simulation Results}
In all of the simulations in this section, the source signal $f$ is chosen as white noise bandlimited to $\sigma$ rad/s and bounded by $A>0$. The quantization step size $\Delta$ and the parameter $\alpha$ are chosen so that the sampling density is greater than or equal to the Landau rate \cite{Landau:1967aa}, which, in our case, is given by $\pi/\sigma$. We choose as a measure of approximation error the signal-to-error ratio (SER) given by
\begin{equation}
\mathrm{SER}=10\log_{10}\Big(\frac{||f||^2_2}{||f-f_k||^2_2}\Big)
\end{equation}
where $f_k$ is the $k$-th iteration.

Since amplitude sampling also implies nonuniform time sampling on the source signal $f$, we also directly apply a nonuniform reconstruction algorithm to recover $f$ in order to illustrate the particular factors influencing the performance of IASR. Specifically, we compare IASR to the Voronoi method developed in \cite[Theorem 8.13]{Feichtinger:1994aa} that presents the best tradeoff between convergence rate and approximation error among the frame-based methods described therein. We chose the Voronoi method since it has been shown to present a convergence rate approximately the same as reconstruction from local averages which has been used in other time encoding techniques for bandlimited signals \cite{Lazar:2003aa}. Based on the bounds in (\ref{eq:boundssamplingset}) for the time instants, it is straightforward to see that the sampling instants in an amplitude sampling setting satisfy the requirements of the Voronoi method for an appropriate choice of the parameters. In particular, it can be shown that it is sufficient that
\begin{equation}
\frac{\Delta}{|\alpha|-A\sigma}>\frac{\pi}{\sigma}.
\end{equation}
In initializing both algorithms, the $0$-th iteration in both IASR and the Voronoi method is assumed to be zero.

\begin{figure}[htb]
\begin{minipage}[b]{\linewidth}
  \centering
  \centerline{\includegraphics[width=0.9\linewidth]{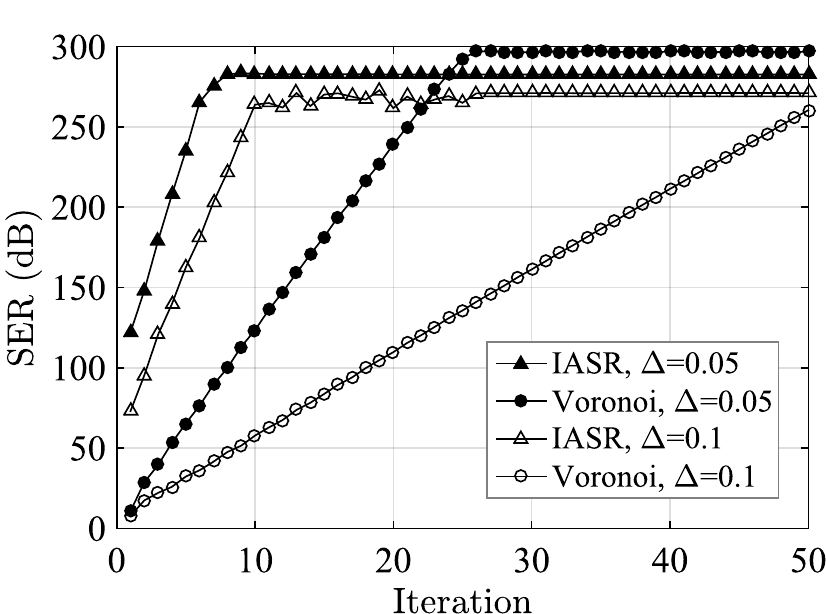}}
  \vspace{0.1cm}
  \centerline{(a)}\medskip
\end{minipage}
\begin{minipage}[b]{\linewidth}
  \centering
  \centerline{\includegraphics[width=0.9\linewidth]{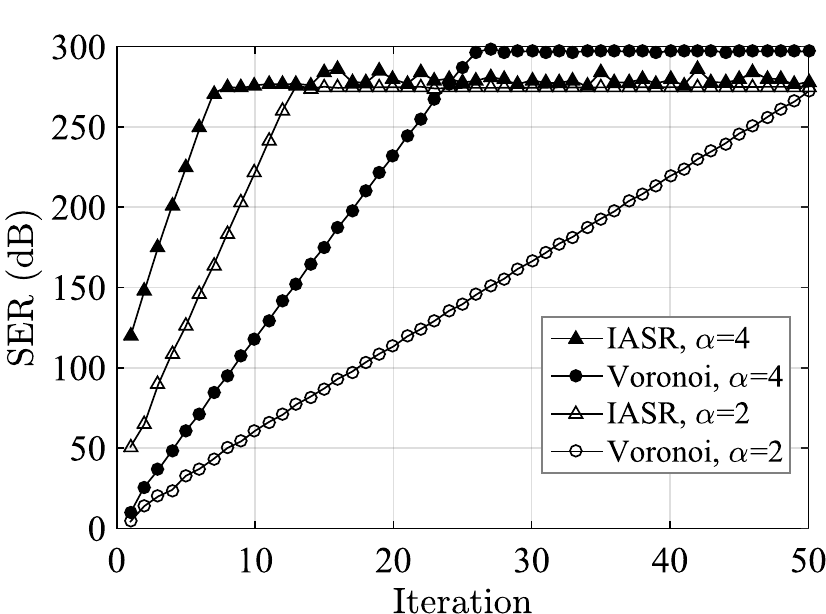}}
  \vspace{0.1cm}
\centerline{(b)}\medskip
\end{minipage}
\caption{Performance comparison between IASR, AWM, and BIA, for a broadband input signal bandlimited and bounded; (a) $\Delta$ is changed while $\alpha$ is fixed; (b)  $\alpha$ is changed while $\Delta$ is fixed.}
\label{fig:AlphaDelta_sims}
\end{figure}

In Fig.~\ref{fig:AlphaDelta_sims}, we have modified separately the parameters $\alpha$ and $\Delta$. In reducing the value of the quantization step size $\Delta$, the transformed function $\alpha t+f(t)$ will clearly cross more amplitude levels per unit of time. Similarly, when the slope of the ramp added to $f$ is increased in absolute value, it also causes an increase in the level-crossing density. Thus, both effects result in an increase of the sampling density, and as shown in the figure, the rate of convergence improves. However, it can be observed that the rate of convergence is faster in the IASR case. The first iteration in IASR achieves a better approximation than the Voronoi method although the rate of convergence appears to be highly insensitive to this change of parameters. On the other hand, the Voronoi method is significantly impacted by the change in sampling density. Moreover, it requires several iterations until it obtains the same approximation performance as the first iteration in IASR. As shown in Fig.~\ref{fig:BW_PabloMN} the same conclusions hold if we increase the oversampling ratio by considering signals with smaller bandwidths and, at the same time, keeping $\alpha$ and $\Delta$ fixed.

Thus far, we have focused on modifying the sampling density. Additionally, due to the structure of the sampling process in amplitude sampling, it is also possible to keep the sampling density fixed while changing both $\alpha$ and $\Delta$ accordingly. The rate of convergence in the Voronoi method is determined by the maximal separation between consecutive sampling instants. With constant sampling density, the performance of the Voronoi method does not change, as shown in Fig.~\ref{fig:FixedDiff_sims}. However, IASR presents an improvement in the rate of convergence. This is not surprising since the difference $|\alpha|-A\sigma$ has increased, which likely results in a better approximation of the sinc interpolation in IASR.

When the input signal is highly oversampled, we have empirically observed that the Voronoi method has a faster rate of convergence. Nevertheless, when the sampling instants become increasingly sparse approaching the Landau rate, IASR performs significantly better.

In summary, overall, IASR appears to have better performance than the Voronoi method in terms of speed of convergence when the sampling density approaches the Landau rate. Changes in the sampling density have a higher impact on the convergence in the Voronoi method than in IASR. Moreover, IASR performance can also be improved by increasing the difference $|\alpha|-A\sigma$ while keeping the sampling density invariant. In \cite{Lai:2016ab}, a scaling of the input signal also produces an increase in the speed of convergence. This performance improvement of IASR over the Voronoi method may be due to the characteristics of the sampling instants. Specifically the sampling instants in IASR inherently incorporate the amplitude sampling structure and therefore contain more information initially than more general nonuniform sampling would. In some sense, this may suggest that IASR is designed to more effectively exploit the structure of this particular sampling process which implicitly is signal dependent and consequently signal information is implicitly embedded in both the sampling times and the sample values.

\begin{figure}[htb]
\begin{minipage}[b]{0.9\linewidth}
  \centering
  \centerline{\includegraphics[width=\textwidth]{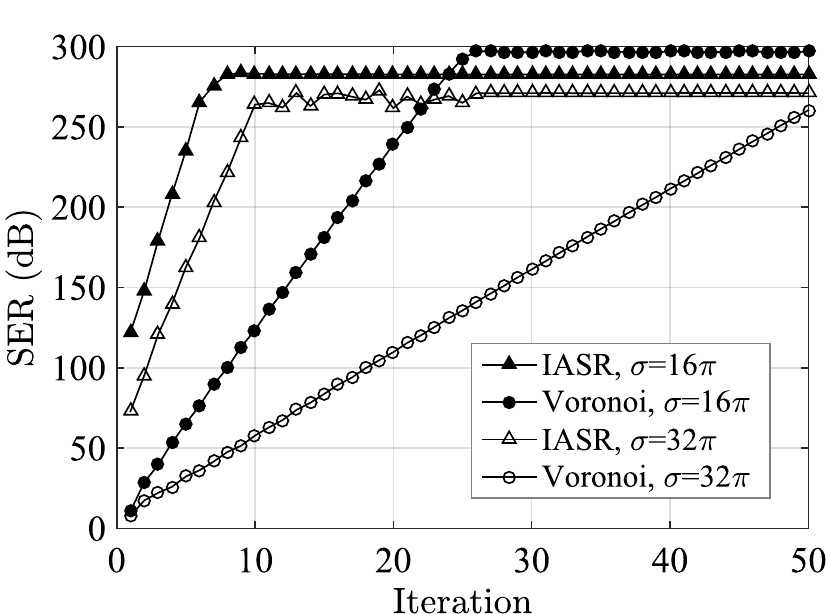}}
\end{minipage}
\caption{Performance comparison between IASR and the Voronoi method when the bandwidth $\sigma$ is changed, and $\alpha$ and $\Delta$ are fixed.}
\label{fig:BW_PabloMN}
\end{figure}

\begin{figure}[htb]
\begin{minipage}[b]{0.9\linewidth}
  \centering
  \centerline{\includegraphics[width=\textwidth]{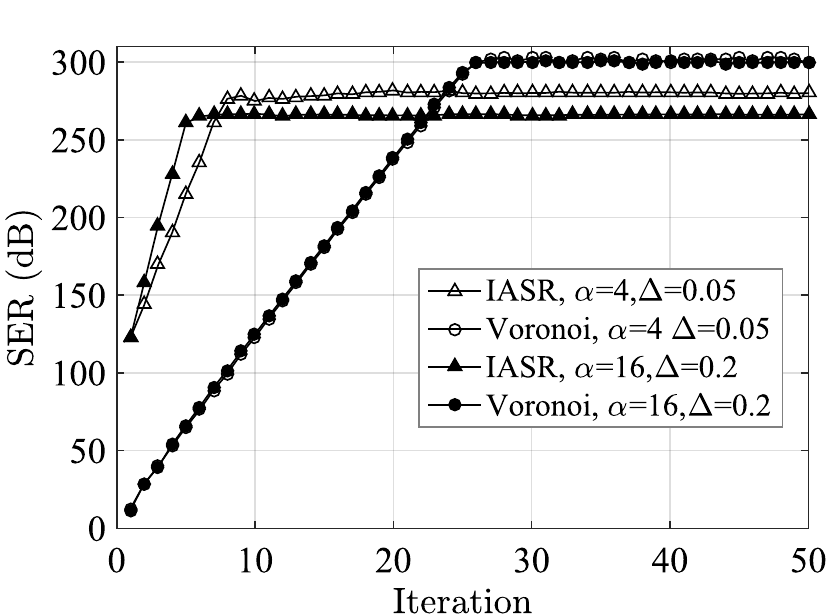}}
\end{minipage}
\caption{Performance comparison between IASR and the Voronoi method when the sampling density is fixed and $\alpha$ and $\Delta$ are changed.}
\label{fig:FixedDiff_sims}
\end{figure}

\subsection{Computational Complexity}
In the previous results, we have compared the performance of both algorithms in terms of convergence rate. Regarding operations per iteration, the IASR algorithm consists mainly of bandlimited interpolation, a lowpass filtering operation, and the two transformations $M_\alpha$ and $M_{1/\alpha}$. The bandlimited interpolation can be equivalently seen as a lowpass filtering operation. From a theoretical perspective, both transformations $M_\alpha$ and $M_{1/\alpha}$ only entail the addition of a ramp where the inverse can be interpreted through a relabeling of the axes $t$ and $u=g(t)$. However, in practice, we have considered in our simulations uniformly oversampled finite-length signals. Thus, in a practical setting, transformations of the form $M_{\alpha}$ take uniform samples to nonuniform samples. Due to the oversampling, we chose to perform linear interpolation to obtain an approximation to the corresponding uniform samples. 

The Voronoi method mainly consists of a zero-order hold and a lowpass filtering operation. Then, obtaining the function values at the nonuniform instants of time, i.e. $f_k(t_n)$, requires additional computation since they are not directly given by the first two operations. 

Fig.~\ref{fig:SERoverTime} shows a comparison of the reconstruction accuracy in terms of computation time for both methods. It is important to emphasize that these results depend on the software implementation and the hardware platform. In this case, we used the MATLAB environment. Among the several ways of simulating a lowpass filtering operation, we utilized the FFT and IFFT algorithms. From the simulations performed, we observe that the computation time per iteration is comparable. This suggests that the conclusions drawn above---i.e. when considering reconstruction accuracy versus number of iterations---can also apply here.

\begin{figure}[htb]
\begin{minipage}[b]{0.97\linewidth}
  \centering
  \centerline{\includegraphics[width=\textwidth]{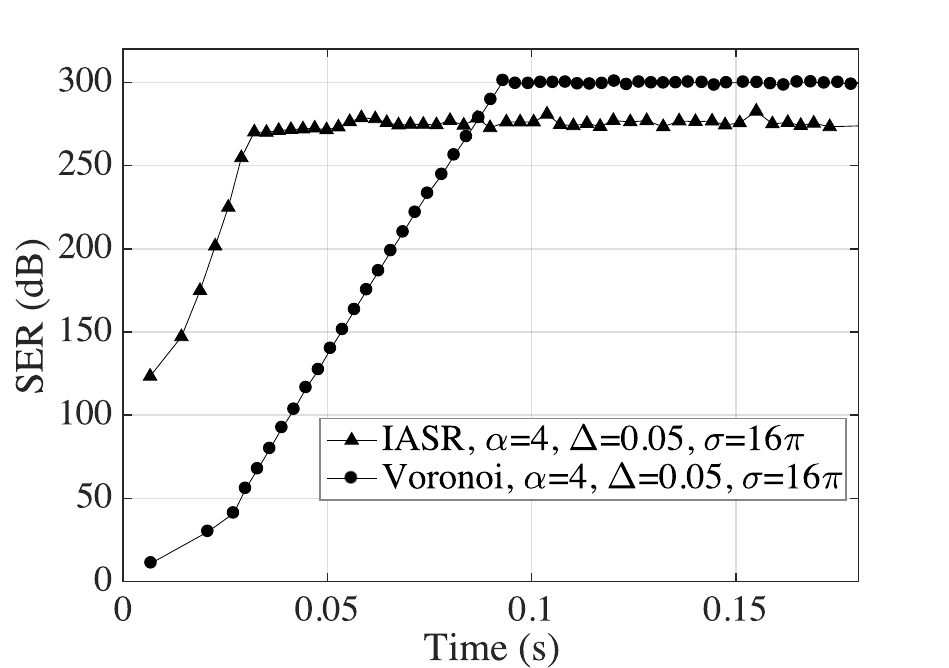}}
\end{minipage}
\caption{Performance comparison between IASR and the Voronoi method based on computation time for a specific hardware and software implementation. The solid markers represent the reconstruction accuracy of the corresponding iteration.}
\label{fig:SERoverTime}
\end{figure}

\section{Conclusion}
We presented the delta-ramp encoder, a form of time encoding where the time sequence generated is directly related to both uniform and nonuniform sampling of the corresponding associated signals. This sampling process can be analyzed theoretically as amplitude sampling which represents a signal with equally-spaced amplitude values and infinite-precision timing information by reversibly transforming the source signal. This transformation provides the perspective of viewing amplitude sampling as uniform sampling of an associated amplitude-time function or equivalently as nonuniform time sampling of the source signal. The properties of both functions are connected by a duality relationship.  Similarly, an iterative algorithm for recovery was proposed and evaluated that exploits the particular characteristics of the sampling instants in amplitude sampling. As opposed to more general nonuniform reconstruction algorithms, the convergence rate can be improved while maintaining the sampling density constant.




\appendices
\section{Proof of Theorem \ref{thm:IterationTheorem}}
\label{app:iterativeMalpha}
According to (\ref{eq:JanesEq_modified}), obtaining $\tilde{h}(u_0)$ for some fixed $u_0\in\mathbb{R}$ is equivalent to finding some $t^*\in\mathbb{R}$ such that $u_0=t^*+f(t^*)/\alpha$ since $\tilde{h}(u_0)=f(t^*)$. Therefore, we have to find the roots of 
\begin{equation}
t=u_0-\frac{1}{\alpha}f(t)\triangleq v_{u_0}(t)
\end{equation}
for $t\in\mathbb{R}$. It is easy to see that $v_{u_0}(t)$ is Lipschitz continuous for some constant $K<1$. Furthermore, there always exists some $\epsilon\geq A/\alpha$ such that $v_{u_o}:I\to I$ where $I=[u_0-\epsilon,u_0+\epsilon]$. Thus, the Banach fixed-point theorem \cite{Banach:1922aa} guarantees the uniqueness and existence of a solution. Moreover, it ensures convergence with the following bounds for the error
\begin{equation}
|t_{n+1}-t^*|\leq K|t_n-t^*|
\end{equation}
for $n\geq0$ where $t_0\in I$ and $t_{n+1}=v_{u_0}(t_n)$. Then, the iteration can be equivalently expressed in terms of the functional composition form as
\begin{equation}
\tilde{h}_{n+1}(u_0)=f(u_0-\frac{1}{\alpha}\tilde{h}_n(u_0)).
\end{equation}
Since $u_0$ was chosen arbitrarily, the same conclusions hold for any $u_0\in\mathbb{R}$.
\hfill$\square$

\section{Proof of Theorem \ref{thm:h_expdecay}}
\label{app:h_expdecay}
For ease of notation, we will substitute the real variable $t$ by $x$ throughout the proof of the theorem (i.e. we will refer to $f(x)$ instead of $f(t)$). Then, the complex variable $z\in\mathbb{C}$ is expressed as $z=x+iy$ for $x,y\in\mathbb{R}$. We will also use the complex variable $w=u+iv$ for $u,v\in\mathbb{R}$ when appropriate. Define the open disk in the complex plane centered at $z_o$ and of radius $r>0$ as 
\begin{equation}
D_r(z_o)=\{z\in\mathbb{C}:|z-z_o|<r\}
\end{equation}
and use $\overline{D}_r(z_o)$ for its closure.

Let us introduce some concepts that will be used throughout the proof. A function complex differentiable at every point in a region $\Omega\subseteq\mathbb{C}$ is said to be holomorphic in $\Omega$. If the function is holomorphic over the whole complex plane, it is referred to as entire. An entire function $f$ is of exponential type if there exists constants $M,\tau>0$ so that $|f(z)|\leq Me^{\tau|z|}$ for all $z\in\mathbb{C}$. If $\sigma=\inf \tau$ taken over all $\tau$ satisfying the latter inequality, it is said to be of exponential type $\sigma$.

We first introduce several results that will become useful in the proof of the theorem. 

\begin{lemma}
\label{lem:univalence}
Let $f$ be a holomorphic function in some region $\Omega\subseteq\mathbb{C}$ with power series $f(z)=\sum_{n=0}^{\infty}a_n(z-z_o)^n$ at $z_o\in\Omega$. Consider a disk of radius $R$ centered at $z_o$ such that $\Omega$ contains the disk and its closure. If $a_1\neq0$ and 
\begin{equation}
\label{eq:lemma1}
|a_1|>\sum_{n=2}^{\infty}|a_n|nR^{n-1}, 
\end{equation}
then $f$ is injective in any open disk of radius $r\leq R$.
\end{lemma}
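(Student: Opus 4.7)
The plan is to prove injectivity by direct estimation: given two distinct points $z_1, z_2 \in D_r(z_o)$ with $r \leq R$, I will show $f(z_1) \neq f(z_2)$ by factoring out $(z_1 - z_2)$ from the power-series difference and bounding what remains below by a positive quantity, using the hypothesis \eqref{eq:lemma1}.

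First I would write
\begin{equation*}
f(z_1) - f(z_2) = \sum_{n=1}^\infty a_n\bigl[(z_1-z_o)^n - (z_2-z_o)^n\bigr],
\end{equation*}
which is valid termwise since $f$ is holomorphic on a neighborhood of $\overline{D}_R(z_o)$, hence the power series converges absolutely there. Applying the standard algebraic identity
\begin{equation*}
A^n - B^n = (A-B)\sum_{k=0}^{n-1} A^{\,k} B^{\,n-1-k}
\end{equation*}
with $A = z_1 - z_o$ and $B = z_2 - z_o$ lets me pull out the common factor $(z_1 - z_2)$, yielding
\begin{equation*}
f(z_1) - f(z_2) = (z_1 - z_2)\Biggl[a_1 + \sum_{n=2}^\infty a_n \sum_{k=0}^{n-1}(z_1-z_o)^k (z_2-z_o)^{n-1-k}\Biggr].
\end{equation*}

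Next I would estimate the bracketed factor. By the triangle inequality and the bound $|z_j - z_o| < r \leq R$, each inner sum has absolute value at most $n R^{n-1}$, so the bracket is bounded below in modulus by
\begin{equation*}
|a_1| - \sum_{n=2}^\infty |a_n|\, n R^{n-1},
\end{equation*}
which is strictly positive by hypothesis \eqref{eq:lemma1}. Since $z_1 \neq z_2$, it follows that $f(z_1) - f(z_2) \neq 0$, establishing injectivity on $D_r(z_o)$.

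I do not anticipate a serious obstacle here. The only matters requiring a brief word are: (i) the absolute convergence of the series $\sum |a_n| n R^{n-1}$, which is already implicit in the hypothesis (and in any case follows from holomorphy on a neighborhood of $\overline{D}_R(z_o)$, since this series is dominated by the derivative series of $f$ evaluated in terms of the Cauchy estimates on a slightly larger disk); and (ii) the justification for termwise manipulation of the series, which is routine given absolute convergence. The core of the argument is the one-line identity factoring $A^n - B^n$, after which the hypothesis does all the work.
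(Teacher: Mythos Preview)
Your proposal is correct and follows essentially the same argument as the paper's proof: both factor $f(z_1)-f(z_2)$ through the identity $A^n-B^n=(A-B)\sum_{k=0}^{n-1}A^kB^{n-1-k}$ and then bound the remaining bracket below by $|a_1|-\sum_{n\geq 2}|a_n|nR^{n-1}$ via the (reverse) triangle inequality. The only cosmetic differences are that the paper normalizes $z_o=0$ and works with the closed disk, whereas you keep $z_o$ general and add a brief remark on absolute convergence.
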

\begin{proof}
Without loss of generality assume $z_o=0$, thus the power series expansion of $f$ around the origin is given by $f(z)=\sum_{n=0}^{\infty}a_nz^n$
for all $z\in\Omega$. Take $z_1, z_2\in\overline{D}_R(0)\subset\Omega$ such that $z_1\neq z_2$ and recall that for any $z,w\in\mathbb{C}$ the following identity holds
\begin{equation}
\begin{split}
(z^n-w^n)=(z-w)(z^{n-1}+&z^{n-2}w+\ldots\\
\ldots+&zw^{n-2}+w^{n-1}).
\end{split}
\end{equation}
We can write
\begin{eqnarray*}
\Big|\frac{f(z_2)-f(z_1)}{z_2-z_1}\Big| & = & \Big|a_1+\sum_{n=2}^{\infty}a_n\frac{z_2^n-z_1^n}{z_2-z_1}\Big| \\
& = & \Big|a_1+\sum_{n=2}^{\infty}a_n(z_2^{n-1}+z_2^{n-2}z_1+\ldots \\
&&\qquad\quad\ldots+z_2z_1^{n-2}+z_1^{n-1})\Big|\\
& \geq & |a_1|-\sum_{n=2}^{\infty}|a_n|nR^{n-1}.
\end{eqnarray*}
where the last inequality follows from the reverse triangle inequality and the fact that $|z_1|, |z_2|\leq R$. Thus, if $|a_1|-\sum_{n=2}^{\infty}|a_n|nR^{n-1}>0$, then $f(z_2)-f(z_1)\neq0$ and $f(z)$ is injective in $D_r(z_o)$ for any $r\leq R$.
\hfill$\square$\end{proof}

\begin{proposition}
\label{proposition:boundarytoboundary}
Let $f: U\to V $ be a bijective continuous function. Consider a set $\Omega$ such that its closure $\overline\Omega$ is strictly contained in $U$, then $f(\partial\Omega)=\partial f(\Omega)$.
\end{proposition}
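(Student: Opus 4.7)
The plan is to reduce the identity to the fact that $f$, when restricted to the compact set $\overline{\Omega}$, is a homeomorphism onto its image, and then transport the boundary relation through this homeomorphism. Since $\overline{\Omega}$ is strictly contained in $U$, I would first observe that $\overline{\Omega}$ is compact. The restriction $f|_{\overline{\Omega}}$ is then a continuous bijection from a compact space onto a Hausdorff subspace of $V$, and a standard point-set topology argument (continuous bijections from compact to Hausdorff are closed, hence homeomorphic) upgrades it to a homeomorphism onto its image.

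Next I would establish the auxiliary identity $f(\overline{\Omega}) = \overline{f(\Omega)}$. The inclusion $\overline{f(\Omega)} \subseteq f(\overline{\Omega})$ holds because $f(\overline{\Omega})$ is the continuous image of a compact set, hence closed in $V$, and it contains $f(\Omega)$. For the reverse inclusion, given $y=f(x)$ with $x \in \overline{\Omega}$, I would select a sequence $x_n \in \Omega$ with $x_n \to x$ and use continuity to obtain $f(x_n) \to y$, placing $y$ in $\overline{f(\Omega)}$. The claim $f(\partial\Omega) = \partial f(\Omega)$ would then follow by writing $\partial\Omega = \overline{\Omega} \setminus \Omega$, invoking injectivity so that $f(\overline{\Omega} \setminus \Omega) = f(\overline{\Omega}) \setminus f(\Omega)$, and using the auxiliary identity together with the homeomorphism to identify the right-hand side with $\overline{f(\Omega)} \setminus f(\Omega) = \partial f(\Omega)$.

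The main obstacle I foresee is aligning the notion of boundary taken inside $\overline{\Omega}$ (and its image) with the topological boundary in the ambient space $V$. The subtle point is that one needs $f(\Omega)$ to be open in $V$ so that the interior of $f(\Omega)$ in $V$ coincides with $f(\Omega)$ itself; otherwise the topological boundary in $V$ may strictly contain $f(\partial\Omega)$. In the paper's intended application, $f$ is a holomorphic injection on an open subset of $\mathbb{C}$, and the open mapping theorem supplies the required openness automatically, so this subtlety is harmless in context. The remaining steps are routine manipulations with closures, interiors, and the bijectivity of $f$.
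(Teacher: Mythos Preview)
Your approach is genuinely different from the paper's. The paper argues the two inclusions by hand: for $f(\partial\Omega)\subseteq\partial f(\Omega)$ it pushes a sequence $x_n\to x\in\partial\Omega$ through $f$ and uses bijectivity to see $f(x)\notin f(\Omega)$; for the reverse inclusion it argues by contradiction, assuming a boundary point $y_o\in\partial f(\Omega)$ comes from some $x_o\notin\overline\Omega$, and then uses continuity of $f^{-1}$ to pull a small disk $D_\epsilon(y_o)$ back into a neighborhood of $x_o$ disjoint from $\overline\Omega$, contradicting the existence of nearby points in $f(\Omega)$. Your route via the compact--Hausdorff homeomorphism lemma and the set identity $f(\overline\Omega\setminus\Omega)=f(\overline\Omega)\setminus f(\Omega)=\overline{f(\Omega)}\setminus f(\Omega)$ is more structural and arguably cleaner; the paper's argument is more elementary but implicitly uses that $f^{-1}$ is continuous, which is exactly the fact you isolate up front.

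Two small points to tighten. First, ``$\overline\Omega$ strictly contained in $U$'' does not by itself give compactness of $\overline\Omega$; you are tacitly using that in the intended application $\Omega$ is a bounded disk in $\mathbb{C}$. Second, you write $\partial\Omega=\overline\Omega\setminus\Omega$, which presumes $\Omega$ is open---again true in context but not part of the stated hypotheses. These are of the same flavor as the openness of $f(\Omega)$ that you already flagged, and all three are immediate in the holomorphic setting where the proposition is actually invoked. With those caveats stated, your argument goes through.
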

\begin{proof}
Consider $x\in\partial\Omega$, which is clearly a limit point of $\Omega$. We know there exist a convergent sequence $x_n\to x$ where $x_n\neq x$ for $n\geq1$ and $x_n\in\Omega$. By continuity, $f(x_n)\to f(x)$ is a convergent sequence in $V$. As $f$ is a bijection from $U$ to $V$, $f(x_n)\neq f(x)$ for all $n\geq1$, thus $f(x)\in\overline{f(\Omega)}$. Moreover, $f(x)\neq f(x')$ for all $x'\in\Omega$, therefore $f(x)\in\partial f(\Omega)$ for all $x\in\partial\Omega$. It follows that $f(\partial\Omega)\subseteq\partial f(\Omega)$. 

Now, we claim that for every $y\in\partial\Omega$ there exists an $x\in\partial\Omega$ such that $y=f(x)$. Imagine this is not true and there exists an $x_o\in U\setminus\partial\Omega$ such that $y_o=f(x_o)$. From our previous discussion, it is clear that $x_o$ cannot be in $\Omega$, then imagine $x_o\in U\setminus\overline\Omega$. Since $f$ is continuous and bijective, we can choose a sufficiently small $\epsilon>0$ such that $f^{-1}(D_\epsilon(y_o))\subset D_\delta(x_o)$ and $D_\delta(x_o)\cap\overline\Omega=\emptyset$ for some $\delta>0$. However, as $y_o$ is a point in the boundary, it holds that $D_\epsilon(y_o)\cap f(\Omega)\neq\emptyset$. Thus, there exist an $x_1\in\Omega$ such that $f(x_1)=y'$ for some $y'\in D_\epsilon(y_o)$. At the same time, there also exists an $x_2\in f^{-1}(D_\epsilon(y_o))$ such that $f(x_2)=y'$, where $x_1\neq x_2$. This contradicts the bijectivity assumption, thus $f(\partial\Omega)\supseteq\partial f(\Omega)$ which together with $f(\partial\Omega)\subseteq\partial f(\Omega)$ gives $f(\partial\Omega)=\partial f(\Omega)$.
\hfill$\square$\end{proof}

\begin{proposition}
\label{proposition:fc_decrease}
Suppose $f(z)$ is an entire function of exponential type $\sigma$ such that $|f(x)|\leq A/(1+x^2)$ for all $x\in\mathbb{R}$. Then, the following bound holds for all $z\in\mathbb{C}$
\begin{equation}
\label{eq:proposition_fc_decrease}
|f(z)|\leq\frac{Ae^{\sigma|y|}}{1+x^2}.
\end{equation}
\end{proposition}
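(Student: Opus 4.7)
The strategy is to apply a Phragmén--Lindelöf / Plancherel--Pólya type bound after multiplying $f$ by an auxiliary polynomial that absorbs the weight $1+x^2$ on the real line. Because the natural global candidate $(1+z^2)f(z)$ has zeros in both half planes (which makes the resulting bound degenerate near $z=\pm i$), I treat the upper and lower half planes separately, choosing a factor that is nonvanishing on the half plane under consideration. Concretely, I rely on the identity $|1-iz|^{2}=(1+y)^{2}+x^{2}$ and its mirror $|1+iz|^{2}=(1-y)^{2}+x^{2}$, both of which reduce to $1+x^{2}$ on the real axis.

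In the upper half plane ($y\geq 0$), I set $h_{+}(z)=(1-iz)^{2}f(z)$. Since $(1-iz)$ vanishes only at $z=-i$, the auxiliary function $h_{+}$ is entire; moreover, multiplication by a polynomial does not affect the exponential type, so $h_{+}$ remains of exponential type $\sigma$. On the real line, $|h_{+}(x)|=|1-ix|^{2}|f(x)|=(1+x^{2})|f(x)|\leq A$. The classical Phragmén--Lindelöf bound for entire functions of exponential type that are bounded on $\mathbb{R}$ then yields $|h_{+}(z)|\leq A e^{\sigma y}$ throughout the closed upper half plane. Dividing by $|(1-iz)^{2}|=(1+y)^{2}+x^{2}$ and using $(1+y)^{2}\geq 1$ for $y\geq 0$ gives
\[
|f(z)|\;\leq\;\frac{A e^{\sigma y}}{(1+y)^{2}+x^{2}}\;\leq\;\frac{A e^{\sigma y}}{1+x^{2}}.
\]
The symmetric construction $h_{-}(z)=(1+iz)^{2}f(z)$ on the lower half plane (where now $(1+iz)$ is nonvanishing) yields $|f(z)|\leq A e^{-\sigma y}/(1+x^{2})$ for $y\leq 0$. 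Combining the two half-plane estimates produces the stated bound $|f(z)|\leq A e^{\sigma|y|}/(1+x^{2})$.

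\textbf{Main obstacle.} The delicate point is the half-plane split: a single global application of Phragmén--Lindelöf to $(1+z^{2})f(z)$ is not enough, because $|1+z^{2}|$ can be strictly smaller than $1+x^{2}$ inside the region $\{y^{2}+2x^{2}<2\}$ (and vanishes at $z=\pm i$), so dividing back would fail to deliver the desired weight. The trick is to use the factors $(1\mp iz)^{2}$, whose only zeros lie in the opposite half plane; in the correct half plane one then has the pointwise inequality $|1\mp iz|^{2}\geq 1+x^{2}$, which is exactly the extra slack needed to convert the Phragmén--Lindelöf output into the claimed bound. Everything else---entirety of $h_{\pm}$, preservation of exponential type under multiplication by a polynomial, and the boundedness of $h_{\pm}$ on $\mathbb{R}$---is immediate from the hypotheses.
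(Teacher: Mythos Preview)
Your proof is correct. Both the paper and you rely on Phragm\'en--Lindel\"of, but the auxiliary functions and the geometry differ. The paper works quadrant by quadrant with $F(z)=(1/A)(1+x^{2})e^{\pm i\sigma z}f(z)$: it checks $|F|\leq 1$ on the two coordinate half-axes bounding each quadrant (using the assumed decay on $\mathbb{R}$ and the elementary estimate $|f(iy)|\leq Ae^{\sigma|y|}$ on the imaginary axis) and then invokes Phragm\'en--Lindel\"of in the sector of opening $\pi/2$. You instead split into the upper and lower half-planes and multiply by the genuinely holomorphic factors $(1\mp iz)^{2}$, whose modulus equals $1+x^{2}$ on $\mathbb{R}$ and satisfies $|1\mp iz|^{2}=(1\pm y)^{2}+x^{2}\geq 1+x^{2}$ in the respective half-plane; the standard half-plane Phragm\'en--Lindel\"of bound for bounded entire functions of exponential type then gives the result directly.

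Your route has a technical advantage: the paper's $F$ carries the non-holomorphic weight $(1+x^{2})$, so the textbook Phragm\'en--Lindel\"of for holomorphic functions does not literally apply, and one must either pass to a subharmonic version (noting that $\log(1+x^{2})$ is not itself subharmonic) or reinterpret the argument. Your polynomial multiplier is holomorphic, so the half-plane maximum principle applies without any such caveat, and the proof is fully self-contained.
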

\begin{proof}
By assumption, $|f(z)|\leq Ae^{\sigma|z|}$ for all $z\in\mathbb{C}$. Construct the function $F(z)=(1/A)(1+x^2)e^{i\sigma z}f(z)$, then $F$ is bounded by 1 on the positive imaginary and positive real axis. If we consider the first quadrant $Q=\{z\in\mathbb{C}:x>0,y>0\}$, it is clear that there exists constants $C,c>0$ such that $|F(z)|\leq Ce^{c|z|}$ for $z\in Q$. We conclude by the Phragm\'en-Lindel\"of theorem \cite[Chapter 4, Theorem 3.4]{Stein:2003aa} that $|F(z)|\leq1$ for all $z$ in $Q$. This implies that $|f(z)|\leq Ae^{\sigma y}/(1+x^2)$ for $z\in Q$. Using the same argument, one can show that the same is true in the second quadrant. For the third and fourth quadrants we use instead the function $F(z)=(1/A)(1+x^2)e^{-i\sigma z}f(z)$, which shows that (\ref{eq:proposition_fc_decrease}) also holds for $y\leq0$.
\hfill$\square$\end{proof}

The function $f$ is of moderate decrease and bandlimited to $[-\sigma,\sigma]$ rad/s. By the Paley-Wiener theorem \cite[Chapter 4, Theorem 3.3]{Stein:2003aa}\cite[Theorem X]{Paley:1934aa}, $f$ is an entire function of exponential type $\sigma$. Then, using Bernstein's inequality \cite{Bernstein:1926aa}, $|f'(x)|\leq \sigma A$. Now, we can split the proof of the theorem in three steps.

\textit{Step 1.} We claim that the function $u=g(x)=\alpha x+f(x)$ admits a real analytic inverse function whenever $\alpha>\sigma A$. It is clear that $g(x)$ is analytic for all $x\in\mathbb{R}$ since it is the sum of two analytic functions on the whole real line. Moreover, $g(x)$ is a strictly increasing monotone function because $|f'(x)|\leq A\sigma$ and $\alpha>A\sigma$, which implies $g'(x)> 0$. The Real Analytic Inverse Function theorem \cite[Theorem 1.4.3]{Krantz:2002ab} guarantees that for a point $x_o$ where $g'(x_o)\neq0$, there exists a neighborhood $J_o$ of $x_o$ and a real analytic function $g^{-1}$ defined on an open interval $I_o$ containing $g(x_o)$ satisfying $(g^{-1}\circ g)(x)=x$ for $x\in J_o$ and $(g\circ g^{-1})(u)=u$ for $y\in I_o$. Since $g'(x)\neq0$ for all real $x$, it is always possible for any given $x_1\in\mathbb{R}$ to find an $x_2\notin J_1$ such that $J_1\cap J_2\neq\emptyset$. Thus, by analytic continuation, we conclude that $g^{-1}(u)$ is analytic on the whole real line.

\textit{Step 2.} We show that the function $g^{-1}(w)$ is analytic in a region containing the horizontal strip
\begin{equation}
\begin{split}
S_a&=\{w\in\mathbb{C}: |\mathrm{Im}(w)|<a,\ \\
&\qquad\mathrm{where}\ a=\frac{\alpha}{\sigma}\log\Big(\frac{\alpha}{\sigma}\Big)-\frac{\alpha-\sigma}{\sigma}\}.
\end{split}
\end{equation}
The function $g(z)=\alpha z+f(z)$ is an entire function of exponential type $\sigma$ and admits a power series expansion around $x\in\mathbb{R}$
\begin{equation}
g(z)=\alpha z+\sum_{n=0}^{\infty}\frac{f^{(n)}(x)}{n!}(z-x)^n
\end{equation}
for all $z\in\mathbb{C}$. By Bernstein's inequality \cite{Bernstein:1926aa}, the derivatives of $f$ are bounded on the real line by $|f^{(n)}(x)|\leq A\sigma^n$. We now look for a region where $g(z)$ is injective. Using Lemma \ref{lem:univalence}, $g(z)$ is injective in a disk of radius $R>0$ whenever
\begin{equation}
|\alpha+f'(x)|>\frac{A}{R}\sum_{n=2}^{\infty}n\frac{(\sigma R)^n}{n!}=A\sigma(e^{R\sigma}-1)
\end{equation}
or, equivalently
\begin{equation}
\label{eq:Rupperbound}
R<\frac{1}{\sigma}\log\Big(1+\frac{|\alpha+f'(x)|}{A\sigma}\Big).
\end{equation}
The right-hand side of this expression is lower bounded by $(1/\sigma)\log(1+(\alpha-A\sigma)/A\sigma)>0$, since $|f'(x)|\leq A\sigma<\alpha$ for all $x\in\mathbb{R}$. Thus, it is always possible to choose a disk of positive radius satisfying this lower bound such that $g(z)$ is injective.

Let us fix an $R$ satisfying this lower bound. Remember that holomorphic functions are open mappings, i.e. they map open sets to open sets. Thus, $g(z)$ maps an open disk of radius $R$ to the open set $g(D_R(x))$. By continuity, $g(D_R(x))$ is also connected since $D_R(x)$ is connected. Therefore, the mapping $g(z): D_R(x)\to g(D_R(x))$ represents a holomorphic bijection, thus its inverse is also holomorphic. Moreover, the inverse agrees with $g^{-1}(u)$ for real $u\in g(D_R(x))$. Thus, it represents the analytic continuation of $g^{-1}(u)$ on $u\in g(D_R(x))$. In fact, we can always choose a disk $D_R(x')$ such that $g(D_R(x))\cap g(D_R(x'))\neq\emptyset$, where the inverse functions defined on their respective images take the same value in the intersection for real $u$. Again, by analytic continuation, we can analytically extend $g^{-1}(u)$ to $g(D_R(x))\cup g(D_R(x'))$. Repeating this process for all real $x$, we obtain the analytic continuation of $g^{-1}(u)$ in the open set $\Omega=\cup_{x\in\mathbb{R}}g(D_R(x))$.

We want to find an $a>0$ such that $S_a\subseteq\Omega$. Using Lemma \ref{proposition:boundarytoboundary}, the boundary of the disk $\partial D_R(x)$ is mapped bijectively to $\partial g(D_R(x))$. Therefore, the largest radius $\rho$ for a disk centered at $g(x)$ such that $D_{\rho}(g(x))\subseteq g(D_R(x))$ for all $x\in\mathbb{R}$ is given by
\begin{equation}
\begin{split}
\rho&=\inf_{x\in\mathbb{R}}\sup_{|z-x|=R}\{|g(z)-g(x)|:\\
&\qquad D_{|g(z)-g(x)|}(g(x))\subseteq g(D_R(x))\}.
\end{split}
\end{equation}
We can use the power series expansion of $g$ around $x$ to find a lower bound for $\rho$ in the following manner
\begin{eqnarray*}
|g(z)-g(x)| & = & |\alpha (z-x)+\sum_{n=1}^{\infty}a_n(z-x)^n|\\
 & \geq & \alpha R-\sum_{n=1}^{\infty}\frac{A\sigma^n}{n!}R^n=\alpha R-A(e^{R\sigma}-1).
\end{eqnarray*}
for $|z-x|=R$. The right-hand side of the last expression represents a strictly concave function of $R$, thus the maximum is achieved for
\begin{equation}
\label{eq:Rmax}
R=\frac{1}{\sigma}\log\Big(\frac{\alpha}{A\sigma}\Big)>0
\end{equation}
which is positive as $\alpha>A\sigma$ and satisfies the upper bound in (\ref{eq:Rupperbound}). Setting the value of $R$ as in (\ref{eq:Rmax}), we can write
\begin{equation}
\label{eq:abound}
|g(z)-g(x)|\geq\rho\geq\frac{\alpha}{\sigma}\log\Big(\frac{\alpha}{A\sigma}\Big)-\Big(\frac{\alpha-A\sigma}{\sigma}\Big)
\end{equation}
for all $x\in\mathbb{R}$ and $|z-x|=R$. This implies that $S_a\subseteq\Omega$ for any $a$ such that
\begin{equation}
\label{eq:abound_2}
a\leq\frac{\alpha}{\sigma}\log\Big(\frac{\alpha}{A\sigma}\Big)-\Big(\frac{\alpha-A\sigma}{\sigma}\Big).
\end{equation}

\textit{Step 3.}
We show that $h(w)$ is of moderate decay on each horizontal line $|\mathrm{Im}(w)|<a$, uniformly in $|y|<a$. First, we note that since $f$ is an entire function of exponential type $\sigma$ and is of moderate decrease along the real line, by Proposition \ref{proposition:fc_decrease}
\begin{equation}
|f(z)|\leq\frac{Ae^{\sigma|y|}}{1+x^2}
\end{equation}
for all $z\in\mathbb{C}$. Let us now fix an $R$ satisfying (\ref{eq:Rmax}), then we have a bijection from $D_R(x')$ to $g(D_R(x'))$ for some $x'\in\mathbb{R}$. Therefore, $z=g^{-1}(w)$, where $w\in g(D_R(x))$ and $z\in D_R(x')$. Since $|y|<R$ for $z\in D_R(x')$, we also have $|f(z)|=|\alpha z-g(z)|\leq Ae^{\sigma R}/(1+x^2)$, or equivalently 
\begin{equation}
\label{eq:gc_bounds}
|w-\alpha g^{-1}(w)|\leq\frac{Ae^{\sigma R}}{1+(g^{-1}(u))^2}
\end{equation}
whenever $w\in g(D_R(x))$ and $z\in D_R(x')$. Using the reverse triangle inequality in the previous expression for real $w$, we can also obtain
\begin{equation}
\label{eq:lowg_bounds}
|g^{-1}(u)|\geq\frac{|u|}{\alpha}-\frac{Ae^{\sigma R}}{\alpha}.
\end{equation}
which is true for all $u\in\mathbb{R}$ since $x=g^{-1}(u)$ holds for all real $x$ and $u$ as shown in the first step of the proof. Define the function for all real $u$
\begin{equation}
\psi(u) =
  \begin{cases} 
      \hfill |u|/\alpha-Ae^{\sigma R}/\alpha\hfill&\text{ if $|u|/\alpha>Ae^{\sigma R}/\alpha$}\\
      \hfill0\hfill&\text{ otherwise}\\
  \end{cases}
\end{equation}
which clearly satisfies $|g^{-1}(u)|\geq|\psi(u)|$. Make $\beta=1/\alpha$ and multiply both sides of (\ref{eq:gc_bounds}) by $1/\alpha$ to see that $|h(w)|=|w/\alpha-g^{-1}(w)|$. Combining these expressions, we can then write for some $A'>0$ and $w\in g(D_R(x'))$
\begin{equation}
|h(w)|\leq\frac{Ae^{\sigma R}/\alpha}{1+g^{-1}(u)^2}\leq\frac{Ae^{\sigma R}/\alpha}{1+\psi(u)^2} \leq \frac{A'}{1+u^2}
\end{equation}
As our choice of $x'$ was arbitrary, this is true for any $x'\in\mathbb{R}$ and $|h(u+iv)|$ is of moderate decrease along horizontal lines. 

Therefore, the function $h(w)$ is analytic on the strip $S_a$ and it is of moderate decrease on each horizontal line $|\mathrm{Im}(w)|=v$, uniformly in $|v|<a$, as long as $\beta=1/\alpha$. By \cite[Chapter 4, Theorem 2.1]{Stein:2003aa}, we conclude that there exists a constant $C>0$ such that $|\hat{h}(\xi)|\leq Ce^{-2\pi b\xi}$ for any $0\leq b<a$.
\hfill$\square$

\section{Proof of Theorem \ref{thm:h_nonBL}}
\label{app:h_nonBL}
Construct the function $g(z)=\alpha z+f(z)$ where $f$ is not constant. By Picard's little theorem \cite[16.22]{Rudin:1986aa}, there exists at most one value $\alpha>A\sigma$ that $f'(z)$ does not take. For the rest of them, there always exists a $z_o\in\mathbb{C}$ such that $g'(z_o)=\alpha+f'(z_o)=0$. Then, it is possible to write
\begin{equation}
g(z)-a_0=(z-z_o)^2[a_2+a_3(z-z_o)+a_4(z-z_o)^2+\ldots].
\end{equation}
Therefore, the function $g(z)-a_o$ has a zero of order $\geq2$ at $z_o$. By the Local Mapping Theorem \cite[Chapter 3, Theorem 11]{Ahlfors:1966aa}, $g$ is $n$-to-1 near $z_o$ for $n\geq2$. Using the argument of analytic continuation of local biholomorphisms in the proof of Theorem \ref{thm:h_expdecay}, we conclude that the analytic extensions of $h$ around $g(z_o)$ are multivalued. This excludes the possibility of $h$ being entire, thus, the restriction of $h$ to the real line cannot be bandlimited.

If $f(z)=C$ for some $C>0$, then $h(u)=-C/\alpha$, which is bandlimited in the distributional sense.
\hfill$\square$

\section{Time-Domain Decay Properties}
\label{app:BoundednessSlopes}
\begin{proposition}
\label{prop:boundedness}
Under the conditions of Theorem \ref{thm:h_expdecay}, construct $h(u)=g^{-1}(u)-\beta u$ for $u\in\mathbb{R}$ and some $\beta\in\mathbb{R}$. Then, the function $h$ is of moderate decrease if and only if $\beta=1/\alpha$.
\end{proposition}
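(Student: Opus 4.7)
My plan is to proceed by rewriting $h$ in a form that isolates the only term that can possibly grow at infinity, and then using the decay bound on $f$ together with the already established growth bound on $g^{-1}$ from Step 3 in the proof of Theorem~\ref{thm:h_expdecay}.

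First I would record the identity obtained by substituting $t=g^{-1}(u)$ into the defining relation $u=\alpha t+f(t)$, namely
\begin{equation*}
g^{-1}(u)=\frac{u}{\alpha}-\frac{1}{\alpha}f\bigl(g^{-1}(u)\bigr),
\end{equation*}
so that
\begin{equation*}
h(u)=g^{-1}(u)-\beta u=\Bigl(\tfrac{1}{\alpha}-\beta\Bigr)u-\tfrac{1}{\alpha}f\bigl(g^{-1}(u)\bigr).
\end{equation*}
Since $g$ is strictly monotonic and continuous with $g:\mathbb{R}\to\mathbb{R}$ a bijection (Step~1 of the proof of Theorem~\ref{thm:h_expdecay}), $g^{-1}$ is continuous, hence so is $h$, which handles the continuity part of moderate decrease. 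Only the decay has to be analyzed.

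For the sufficiency direction ($\beta=1/\alpha$), the identity reduces to $h(u)=-f(g^{-1}(u))/\alpha$. I would then invoke the bound already derived as inequality~(\ref{eq:lowg_bounds}) in the proof of Theorem~\ref{thm:h_expdecay}, which gives
\begin{equation*}
|g^{-1}(u)|\geq \tfrac{|u|}{\alpha}-\tfrac{Ae^{\sigma R}}{\alpha},
\end{equation*}
so that $(g^{-1}(u))^2\geq (|u|/\alpha-c)^2$ for some constant $c>0$ whenever $|u|$ is sufficiently large. Combining this with the assumption $|f(t)|\leq A/(1+t^2)$ produces an estimate $|h(u)|\leq A'/(1+u^2)$ for some $A'>0$ (adjusting the constant to absorb the behavior near the origin using continuity of $h$ on the compact set where the large-$|u|$ estimate does not yet apply). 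This shows $h$ is of moderate decrease.

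For the necessity direction ($\beta\neq 1/\alpha$), the same decomposition gives $|h(u)|\geq|1/\alpha-\beta|\,|u|-|f(g^{-1}(u))|/|\alpha|$. The second term is bounded (in fact decays) by the argument above, while the first grows linearly, so $|h(u)|\to\infty$. This contradicts moderate decrease, completing the equivalence. I do not expect any serious obstacle; the only subtlety is being careful that the bound on $|g^{-1}(u)|$ only holds for $|u|$ large enough and combining it with continuity on the complementary bounded set to obtain a uniform moderate-decrease constant.
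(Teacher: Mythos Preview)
Your proposal is correct and follows essentially the same route as the paper: the sufficiency direction is exactly what the paper does (it simply cites the moderate-decrease bound established in Step~3 of the proof of Theorem~\ref{thm:h_expdecay}), and for necessity the paper likewise shows linear growth of $|h(u)|$ by combining the bound $\bigl||g^{-1}(u)|-|u|/\alpha\bigr|\leq A/\alpha$ with the reverse triangle inequality. Your explicit decomposition $h(u)=(1/\alpha-\beta)u-\tfrac{1}{\alpha}f(g^{-1}(u))$ makes the same argument slightly more transparent, since the boundedness of the second term is immediate from $|f|\leq A$ without any triangle-inequality gymnastics.
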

\begin{proof}
The backward direction, i.e. assuming $\beta=1/\alpha$, has been proved in Theorem \ref{thm:h_expdecay}. For the forward direction, assume on the contrary that $\beta\neq1/\alpha$. From (\ref{eq:gc_bounds}), we have the following bound
\begin{equation}
\Big||g^{-1}(u)|-|u|/\alpha\Big|\leq A/\alpha.
\end{equation}
Therefore, using the reverse triangle inequality and the previous expression, we arrive at the following for large enough $u$
\begin{equation}
\begin{split}
|h(u)|=|g^{-1}(u)-\beta u|&\geq\Big||g^{-1}(u)|-\beta|u|\Big|\\
&\geq\Big||u|\Big|\beta-\frac{1}{\alpha}\Big|-\frac{A}{\alpha}\Big|.
\end{split}
\end{equation}
Clearly, the right-hand side grows linearly without bound. Thus, $|h(u)|$ is unbounded for $\beta\neq1/\alpha$ and cannot be of moderate decrease.
\hfill$\square$\end{proof}

\begin{proposition}
\label{prop:Lpnorms}
Under the conditions of Theorem \ref{thm:h_expdecay}, the following relationship holds
\begin{equation}
||h||_p=\frac{1}{\alpha^{1-\frac{1}{p}}}||f||_p
\end{equation}
whenever $1\leq p<\infty$.
\end{proposition}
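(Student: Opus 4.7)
The plan is to reduce the integral defining $\|h\|_p$ to an integral against $f$ via the change of variables $u = g(t) = \alpha t + f(t)$. Under the hypotheses of Theorem \ref{thm:h_expdecay}, the map $g$ is a real analytic bijection of $\mathbb{R}$ with $g'(t) = \alpha + f'(t) > 0$, so the substitution is valid. The key identity from (\ref{eq:JanesEq}) says $h(g(t)) = -f(t)/\alpha$, which converts the integrand cleanly.

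Carrying out the substitution, I would write
\begin{equation*}
\|h\|_p^p \;=\; \int_{\mathbb{R}} |h(u)|^p\, du \;=\; \int_{\mathbb{R}} |h(g(t))|^p\, g'(t)\, dt \;=\; \frac{1}{\alpha^p}\int_{\mathbb{R}} |f(t)|^p\bigl(\alpha+f'(t)\bigr)\, dt,
\end{equation*}
and then split the right-hand side into the sum of $\alpha^{-(p-1)}\|f\|_p^p$ and a cross term $\alpha^{-p}\int_{\mathbb{R}} |f(t)|^p f'(t)\, dt$. The target identity will follow once the cross term is shown to vanish.

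The cross term is an exact derivative. Defining the primitive $\Phi(x) = \int_0^x |s|^p\, ds$, I have $\frac{d}{dt}\Phi(f(t)) = |f(t)|^p f'(t)$, so
\begin{equation*}
\int_{\mathbb{R}} |f(t)|^p f'(t)\, dt \;=\; \Phi(f(+\infty)) - \Phi(f(-\infty)).
\end{equation*}
The decay assumption $|f(t)| \leq A/(1+t^2)$ forces $f(t) \to 0$ at $\pm\infty$, and since $\Phi$ is continuous with $\Phi(0)=0$, both boundary terms vanish. Taking $p$-th roots yields $\|h\|_p = \alpha^{-(1-1/p)}\|f\|_p$, as required. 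Finiteness of $\|h\|_p$ itself is guaranteed by Proposition \ref{prop:boundedness}, which ensures $h$ is of moderate decrease and hence in every $L^p(\mathbb{R})$ for $p \geq 1$.

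The main potential obstacle is the legitimacy of the change of variables and the vanishing of the boundary terms in the fundamental-theorem-of-calculus step; both are handled by the strict monotonicity of $g$ (from $|\alpha| > A\sigma$) and the $1/(1+t^2)$ decay of $f$. No delicate analysis is needed beyond checking these standing hypotheses, so the argument is essentially a one-line computation dressed up with mild regularity bookkeeping.
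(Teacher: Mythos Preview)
Your proof is correct and follows the same strategy as the paper: change variables via $u=g(t)$, use the identity $h(g(t))=-f(t)/\alpha$ from (\ref{eq:JanesEq}), and then show the cross term $\int_{\mathbb{R}}|f|^p f'\,dt$ vanishes. Your antiderivative $\Phi(x)=\int_0^x |s|^p\,ds$ dispatches the cross term uniformly for all $p\geq 1$, whereas the paper first treats even $p$ via $(f^{p+1})'$ and then handles general $p$ by partitioning $\mathbb{R}$ into the sign-constant intervals of $f$; your route is slightly cleaner but the underlying idea is the same.
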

\begin{proof}
Using (\ref{eq:JanesEq}) and the change of variables $u=f(t)+\alpha t$, it is clear that for any even number $p$
\begin{equation}
\begin{split}
||h||_p^p&=\int_{\mathbb{R}}h(f(t)+\alpha t)^p(f'(t)+\alpha)\mathrm{d}t\\
&=\frac{1}{\alpha^p}\int_{\mathbb{R}}f(t)^pf'(t)\mathrm{d}t+\frac{1}{\alpha^{p-1}}||f||_p^p.
\end{split}
\end{equation}
We now show that the first term on the right-hand side vanishes. First note that by the fundamental theorem of calculus, $\int_{\mathbb{R}}f'(t)\mathrm{d}t=0$ for any continuous function satisfying $\lim_{|t|\to\infty}f(t)=0$. Thus, we arrive at the following
\begin{equation}
\int_{\mathbb{R}}f(t)^pf'(t)\mathrm{d}t=\frac{1}{p+1}\int_{\mathbb{R}}(f(t)^{p+1})'\mathrm{d}t=0
\end{equation}
since $f(t)^{p+1}$ is also of moderate decay. Therefore, $||h||_p=||f||_p/\alpha^{(p-1)/p}$.

Now, for any number $p\geq1$, we can choose a collection of intervals such that $f(t)\geq0$ for $t\in(a_n,a_{n+1})$ and $f(t)\leq0$ for $t\in(b_m,b_{m+1})$ for all $m,n\in\mathbb{Z}$. Thus, we can write
\begin{equation}
\label{eq:splitNorm}
\begin{split}
\int_{\mathbb{R}}|f(t)|^pf'(t)\mathrm{d}t&=\sum_{n\in\mathbb{Z}}\int_{a_n}^{a_{n+1}}f(t)^pf'(t)\mathrm{d}t\\
&-\sum_{m\in\mathbb{Z}}\int_{b_m}^{b_{m+1}}f(t)^pf'(t)\mathrm{d}t.
\end{split}
\end{equation}
For each $n\in\mathbb{Z}$, we then have that
\begin{equation}
\int_{a_n}^{a_{n+1}}f(t)^pf'(t)\mathrm{d}t=\frac{f(t)^{p+1}}{p+1}\Big|_{a_n}^{a_{n+1}}=0
\end{equation}
since $f(a_n)^{p+1}=0$ for all $n\in\mathbb{Z}$. The same is true for the second term of the right-hand side of (\ref{eq:splitNorm}). Note that we allow to have intervals of the form $(c,\infty)$ or $(-\infty,d)$ for any $c,d\in\mathbb{R}$ and the same holds true since $\lim_{|t|\to\infty}f(t)^{p+1}=0$ for $p\geq1$. Therefore,
\begin{equation}
\int_{\mathbb{R}}|f(t)|^pf'(t)\mathrm{d}t=0
\end{equation}
and $||h||_p=||f||_p/\alpha^{(p-1)/p}$ for any number $p\geq1$.
\hfill$\square$\end{proof}

\begin{proposition}
\label{prop:L1norms}
Under the conditions of Theorem \ref{thm:h_expdecay}, the following relationship holds
\begin{equation}
||h_1-h_2||_1=||f_1-f_2||_1
\end{equation}
where $h_1=M_\alpha f_1$ and $h_2=M_\alpha f_2$.
\end{proposition}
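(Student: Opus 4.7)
The plan is to reduce the identity to a classical geometric fact: for two strictly increasing functions, the $L^1$ distance between the functions equals the $L^1$ distance between their inverses, via a Fubini argument on the region between their graphs.

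First, I would strip away the linear terms. Writing $h_i(u) = g_i^{-1}(u) - u/\alpha$ and $f_i(t) = g_i(t) - \alpha t$ with $g_i(t) = \alpha t + f_i(t)$, the $u/\alpha$ terms cancel in the difference, so
\begin{equation*}
h_1(u) - h_2(u) = g_1^{-1}(u) - g_2^{-1}(u), \qquad f_1(t) - f_2(t) = g_1(t) - g_2(t).
\end{equation*}
Thus the proposition is equivalent to showing
\begin{equation*}
\int_{\mathbb{R}} |g_1^{-1}(u) - g_2^{-1}(u)|\,\mathrm{d}u = \int_{\mathbb{R}} |g_1(t) - g_2(t)|\,\mathrm{d}t.
\end{equation*}

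Next, under the hypotheses of Theorem \ref{thm:h_expdecay}, both $g_1$ and $g_2$ are strictly increasing bijections of $\mathbb{R}$. I would define the planar set
\begin{equation*}
E = \{(t,u)\in\mathbb{R}^2 : u \text{ lies between } g_1(t) \text{ and } g_2(t)\}.
\end{equation*}
Because each $g_i$ is strictly increasing, the inequality $g_i(t) < u$ is equivalent to $t < g_i^{-1}(u)$, so $E$ admits the dual description $\{(t,u) : t \text{ lies between } g_1^{-1}(u) \text{ and } g_2^{-1}(u)\}$. The slice of $E$ at fixed $t$ has length $|g_1(t)-g_2(t)|$, while the slice at fixed $u$ has length $|g_1^{-1}(u)-g_2^{-1}(u)|$. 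Applying Tonelli's theorem to $\mathbf{1}_E$ and computing the two-dimensional measure in both orders yields the desired identity.

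The only technicality is checking that Fubini/Tonelli applies, i.e.\ that $E$ has finite Lebesgue measure. This comes for free from the assumption $|f_i(t)|\le A/(1+t^2)$ in Theorem \ref{thm:h_expdecay}, which gives $f_1-f_2 \in L^1(\mathbb{R})$, so $\int |g_1(t)-g_2(t)|\,\mathrm{d}t < \infty$. I anticipate no serious obstacle beyond this finiteness check; the main content of the proof is simply identifying $E$ as the common meaning of the two integrals via strict monotonicity of the $g_i$.
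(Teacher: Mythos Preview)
Your proposal is correct and matches the paper's proof essentially step for step: the paper also cancels the ramp terms to reduce to $\|g_1^{-1}-g_2^{-1}\|_1=\|g_1-g_2\|_1$, introduces the same planar region between the graphs, and uses Fubini together with strict monotonicity of $g_1,g_2$ to show the two slicings describe the same set. The only cosmetic difference is that you invoke Tonelli and explicitly note the finiteness of the region, whereas the paper simply cites Fubini.
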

\begin{proof}
Note that $|h_1(u)-h_2(u)|=|g^{-1}_1(u)-g^{-1}_2(u)|$ and $|f_1(t)-f_2(t)|=|g_1(t)-g_2(t)|$, thus by Fubini's theorem \cite[Theorem 4.1.6]{Stroock:2011aa} we arrive at the following
\begin{eqnarray*}
||h_1-h_2||_1&=&\int_{\mathbb{R}}|g^{-1}_1(u)-g^{-1}_2(u)|\mathrm{d}u=\int_{\mathbb{R}^2}\mathbf{1}_{\Gamma}\mathrm{d}t\mathrm{d}u\\
||f_1-f_2||_1&=&\int_{\mathbb{R}}|g_1(t)-g_2(t)|\mathrm{d}t=\int_{\mathbb{R}^2}\mathbf{1}_{\Lambda}\mathrm{d}t\mathrm{d}u
\end{eqnarray*}
where
\begin{eqnarray*}
\Gamma&=&\{(t,u)\in\mathbb{R}^2: t\in\mathbb{R}, \\
&&\min\{g_1(t),g_2(t)\}\leq u\leq\max\{g_1(t),g_2(t)\}\}\\
\Lambda&=&\{(t,u)\in\mathbb{R}^2: u\in\mathbb{R}, \\
&&\min\{g^{-1}_1(u),g^{-1}_2(u)\}\leq t\leq\max\{g^{-1}_1(u),g^{-1}_2(u)\}\}.
\end{eqnarray*}
For an arbitrary $(t_o,u_o)\in\Gamma$, $t_o\leq\max\{g^{-1}_1(u_o),g^{-1}_2(u_o)\}$ and $t_o\geq\min\{g^{-1}_1(u_o),g^{-1}_2(u_o)\}$ since $g_1$ and $g_2$ are strictly increasing. This implies that $\Gamma\supseteq\Lambda$. Using the same reasoning, we obtain $\Gamma\subseteq\Lambda$ which implies that $\Gamma=\Lambda$. Thus, the integrals are the same.
\hfill$\square$\end{proof}

\section{}
\label{app:infinityerror}
\begin{proposition}
\label{prop:BLerror}
If $h_{\Delta }$ is the bandlimited approximation to $h$, then there exists a $C'>0$ such that 
\begin{equation}
||h-h_{\Delta}||_{\infty}\leq \frac{C'}{a}e^{-\pi\frac{b}{\Delta}}
\end{equation}
for any $0\leq b<a$ where $a=\frac{\alpha}{\sigma}\log(\frac{\alpha}{A\sigma})-\frac{\alpha-A\sigma}{\sigma}$.
\end{proposition}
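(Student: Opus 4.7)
The plan is to bound $\|h - h_\Delta\|_\infty$ by controlling the aliasing error in the Fourier domain, exploiting the exponential decay of $\hat{h}$ established in Theorem \ref{thm:h_expdecay}. First, I would recall that the sinc-interpolation formula in (\ref{eq:bandlimitedinterpolation}) is the standard bandlimited reconstruction obtained by sampling $h$ at spacing $\Delta$ and applying an ideal lowpass filter of cutoff $1/(2\Delta)$ Hz. Since $h$ is of moderate decrease (by Proposition \ref{prop:boundedness}) and $\hat{h}$ satisfies $|\hat{h}(\xi)|\leq Ce^{-2\pi b|\xi|}$ for any $0\leq b<a$, both $h$ and $\hat{h}$ are integrable, so Poisson summation applies and gives
\[
\widehat{h_\Delta}(\xi) = \mathbf{1}_{[-1/(2\Delta),1/(2\Delta)]}(\xi)\sum_{k\in\mathbb{Z}}\hat{h}(\xi - k/\Delta).
\]

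Next, I would split the Fourier-domain error $\hat h - \widehat{h_\Delta}$ into two regions and use $\|h - h_\Delta\|_\infty \leq \|\hat h - \widehat{h_\Delta}\|_1$ via the Fourier inversion formula. In the tail $|\xi|>1/(2\Delta)$ the error equals $\hat h(\xi)$, so direct integration of $Ce^{-2\pi b|\xi|}$ contributes a term of order $\frac{C}{\pi b}e^{-\pi b/\Delta}$. In the baseband $|\xi|\leq 1/(2\Delta)$ the error equals $-\sum_{k\neq 0}\hat{h}(\xi - k/\Delta)$; substituting $\eta = \xi - k/\Delta$ in each term, bounding by $Ce^{-2\pi b|\eta|}$, and summing the resulting geometric series $\sum_{k\geq 1}e^{-2\pi kb/\Delta}$ yields another contribution of the same order $\frac{C}{\pi b}e^{-\pi b/\Delta}$. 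Combining these gives a bound of the form $\frac{C_1}{b}e^{-\pi b/\Delta}$ valid for every $0\leq b<a$.

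The main obstacle is matching the specific form $\frac{C'}{a}e^{-\pi b/\Delta}$ claimed in the proposition, since the natural bound carries $1/b$ rather than $1/a$ in front. I would resolve this by fixing an auxiliary $b^{\star}$ with $b\leq b^{\star}<a$ (for instance, $b^{\star}=(a+b)/2$), deriving the estimate with $b^{\star}$ in place of $b$, and then bounding $e^{-\pi b^{\star}/\Delta}\leq e^{-\pi b/\Delta}$ since $b^{\star}\geq b$. The remaining prefactor $1/b^{\star}$ is at worst of order $1/a$ up to a multiplicative constant, which I would absorb into a single constant $C'$. The key content of the argument is thus entirely in the exponential decay of $\hat h$ from Theorem \ref{thm:h_expdecay}; the $1/a$ in the statement is simply a convenient normalization that follows by this absorption, and only the exponent $b$ is essential.
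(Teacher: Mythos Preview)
Your argument is correct, but the paper takes a shorter route: rather than invoking Poisson summation and splitting into tail and baseband pieces, it applies the generalized Weiss aliasing bound directly, which yields
\[
|h(u)-h_\Delta(u)|\;\leq\;2\int_{|\xi|>1/(2\Delta)}|\hat h(\xi)|\,\mathrm{d}\xi
\]
in one stroke. Your Poisson-summation computation is essentially a rederivation of this inequality: once you change variables in each baseband alias term, the shifted intervals tile $\{|\eta|>1/(2\Delta)\}$ exactly, so your two contributions collapse to the same tail integral (times~2). Thus the two approaches coincide at that point; yours is more self-contained, the paper's is more compact because it cites the known result.

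One point worth noting: the paper's own calculation ends with $\frac{2C}{\pi b}e^{-\pi b/\Delta}$, i.e.\ the natural $1/b$ prefactor, and does not carry out the passage from $1/b$ to $1/a$ that the statement advertises. Your absorption trick with $b^\star=(a+b)/2\geq a/2$ is a clean way to supply that missing step (it relies on the constant $C$ in Theorem~\ref{thm:h_expdecay} being uniform in $b$, which is how that theorem is stated). So in this respect your write-up is actually more complete than the paper's.
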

\begin{proof}
Since $h(u)$ is continuous and of moderate decrease, $\hat{h}(\xi)$ is also continuous. Moreover, we know from Theorem \ref{thm:h_expdecay} that $|\hat{h}(\xi)|\leq Ce^{-b2\pi|\xi|}$ for some $C>0$ and $0\leq b<a$ where $a$ is defined in (\ref{eq:h_expdecay}). Clearly, $\hat{h}$ is Lebesgue measurable and absolutely integrable. Thus, by the generalized form of Weiss's theorem \cite{Brown:1967aa} we can bound the approximation error as
\begin{equation}
\begin{split}
|h(u)-h_{\Delta}(u)|&\leq 2\int_{|\xi|>1/2\Delta}|\hat{h}(\xi)|\mathrm{d}\xi\\
&\leq 4C\int_{\xi>1/2\Delta}e^{-2\pi\xi b}\mathrm{d}\xi=\frac{4C}{2\pi b}e^{-\pi\frac{b}{\Delta}}
\end{split}
\end{equation}
for all $u\in\mathbb{R}$.
\hfill$\square$\end{proof}

\ifCLASSOPTIONcaptionsoff
  \newpage
\fi



%

\bibliographystyle{IEEEtran}
\bibliography{/Users/pmnuevo/Documents/BibDeskLibrary/BibDeskLibrary}
\end{document}